\title{Symmetry and Conservation Laws in Semiclassical Wave Packet Dynamics}
\author{Tomoki Ohsawa}
\address{Department of Mathematical Sciences, The University of Texas at Dallas, 800 W Campbell Rd, Richardson, TX 75080-3021}
\email{tomoki@utdallas.edu}
\date{\today}
\keywords{Semiclassical mechanics, Gaussian wave packet dynamics, Hamiltonian dynamics, Symplectic geometry, Noether's theorem, Momentum maps}
\subjclass[2010]{37J15, 70G45, 70H06, 70H33, 81Q05, 81Q20, 81Q70, 81S10}
\theoremstyle{plain}
\newtheorem{theorem}{Theorem}[section]
\newtheorem{lemma}[theorem]{Lemma}
\newtheorem{proposition}[theorem]{Proposition}
\theoremstyle{definition}
\newtheorem{example}[theorem]{Example}
\theoremstyle{remark}
\newtheorem{remark}[theorem]{Remark}
\def\od#1#2{\dfrac{d#1}{d#2}}
\def\pd#1#2{\dfrac{\partial #1}{\partial #2}}
\def\parentheses#1{{\left(#1\right)}}
\def\brackets#1{{\left[#1\right]}}
\def\braces#1{{\left\{#1\right\}}}
\def\tr{\mathop{\mathrm{tr}}\nolimits}
\def\norm#1{{\left\|#1\right\|}}
\def\abs#1{{\left|#1\right|}}
\def\DS{\displaystyle}
\def\R{\mathbb{R}}
\def\C{\mathbb{C}}
\def\defeq{\mathrel{\mathop:}=}
\def\setdef#1#2{{\left\{ #1 \ |\ #2 \right\}}}
\def\ip#1#2{{\left\langle#1,#2\right\rangle}}
\def\tip#1#2{{\langle#1,#2\rangle}}
\def\exval#1{{\left\langle#1\right\rangle}}
\def\texval#1{{\langle#1\rangle}}
\renewcommand{\Re}{\operatorname{Re}}
\renewcommand{\Im}{\operatorname{Im}}
\def\eps{\varepsilon}
\def\SO{\mathsf{SO}}
\def\Sp{\mathsf{Sp}}
\def\U{\mathsf{U}}
\def\so{\mathfrak{so}}
\newenvironment{tbmatrix}{\left[\begin{smallmatrix}}{\end{smallmatrix}\right]}
\def\d{{\bf d}}
\def\ins#1{{\bf i}_{#1}}
\newcommand\Ad{\operatorname{Ad}}
\begin{document}

\footskip=.6in

\begin{abstract}
  We formulate symmetries in semiclassical Gaussian wave packet dynamics and find the corresponding conserved quantities, particularly the semiclassical angular momentum, via Noether's theorem.
  We consider two slightly different formulations of Gaussian wave packet dynamics; one is based on earlier works of Heller and Hagedorn, and the other based on the symplectic-geometric approach by Lubich and others.
  In either case, we reveal the symplectic and Hamiltonian nature of the dynamics and formulate natural symmetry group actions in the setting to derive the corresponding conserved quantities (momentum maps).
  The semiclassical angular momentum inherits the essential properties of the classical angular momentum as well as naturally corresponds to the quantum picture.
\end{abstract}

\maketitle

\section{Introduction}
\subsection{Gaussian Wave Packet and Semiclassical Dynamics}
Techniques involving the Gaussian wave packets have been used to approximate quantum dynamics in the semiclassical regime for a long time.
The origin of the modern usage seems to go back to the works of \citet{He1975a,He1976b,He1981}, \citet{Ha1980,Ha1998}, and \citet[Section~7]{Li1986}; see also references therein.
The most salient feature of the Gaussian wave packet is that it gives an {\em exact} solution to the Schr\"odinger equation with quadratic potentials, given that the parameters governing the dynamics of the wave packet follow a set of ordinary differential equations (ODEs) that is essentially Hamilton's equations of classical mechanics plus some additional ODEs.
It is also established by \citet{Ha1980, Ha1998} that, even if the potential is not quadratic, under some regularity assumptions, the Gaussian wave packet approximates quantum dynamics to the order of $O(\sqrt{\hbar})$ with the same set of ODEs governing the dynamics of it.
\citet{Ha1980, Ha1998} pushes it even further to show that one can generate an orthonormal basis of $L^{2}(\R^{d})$, $d$ being the dimension of the configuration space, by following a construction analogous to the Hermite functions.
Furthermore, it is shown that one may improve the order of approximation by selecting a finite number of elements from the basis, again using the same set of ODEs; such technique is implemented by \citet{FaGrLu2009} to solve the semiclassical Schr\"odinger equation numerically.
The key feature of this basis, often called the Hagedorn wave packets, is that the basis elements are time-dependent and their time evolution is governed by the ODEs.
Therefore, quantitative and qualitative understanding of the behaviors of the solutions of the ODEs is important in theoretical and numerical studies using the Hagedorn wave packets.

\subsection{Symplectic/Hamiltonian View of Gaussian Wave Packet Dynamics}
The symplectic/Hamiltonian nature of the dynamics of the ODEs governing the semiclassical Gaussian wave packet has been studied earlier by, e.g., \citet{Li1988}, \citet{SiSuMu1988}, \citet{BrLaVa1989}, and \citet{PaSc1994}.
More recently, \citet{Lu2008} (see also \citet{FaLu2006}) gave a more systematic account of the symplectic formulation of the Gaussian wave packet dynamics, and this was studied further by us~\cite{OhLe2013} by making use of techniques of geometric mechanics.

The advantage of the symplectic/Hamiltonian viewpoint is that we make use of the arsenal of tools for Hamiltonian systems; see e.g., \citet{AbMa1978} and \citet{MaRa1999}.
Most notably, symmetry and conservation laws in Hamiltonian systems are linked via Noether's theorem: Practically speaking one may easily find conservation laws by observing symmetries in the system of interest.

The symplectic approach to semiclassical dynamics is also natural in the following sense:
Semiclassical dynamics is roughly speaking quantum dynamics in the regime close to classical dynamics; but then the basic equations of classical and quantum dynamics are both Hamiltonian systems with respect to appropriate symplectic structures (see a brief sketch of the symplectic approach to quantum dynamics in Section~\ref{ssec:SymplecticQM}).
So it is natural to formulate semiclassical dynamics retaining the underlying symplectic structure for quantum dynamics as well as to establish a link with the symplectic structure for classical dynamics.

\subsection{Main Results and Outline}
The main focus of this paper is to exploit the symplectic geometry behind the Gaussian wave packet dynamics and derive conserved quantities for the dynamics when the potential has symmetries.
We are mainly interested in the case with rotational symmetry and the corresponding angular momentum in the semiclassical setting.
Needless to say, identifying conserved quantities helps understand the qualitative and quantitative properties of the dynamics.
Moreover, in this particular setting, we will show how the semiclassical dynamics inherits some features of the corresponding classical counterpart, as well as being compatible with the quantum picture; thereby building a bridge between the classical and quantum formalisms.

In Section~\ref{sec:Symplectic_Semiclassical_Dynamics}, we first give a brief review of the symplectic formulation of the Gaussian wave packet dynamics by \citet{Lu2008} and \citet{OhLe2013}; this formulation is more amenable to symmetry analysis because the set of ODEs as a whole is formulated as a single Hamiltonian system.
Then, in Section~\ref{sec:Symmetry_in_Semiclassical_Dynamics}, we consider the Gaussian wave packet dynamics with rotational symmetry, and find and look into the properties of the corresponding conserved quantities (momentum maps).
It turns out that, thanks to the symplectic formulation, the semiclassical angular momentum inherits the key properties of the classical angular momentum as well as possesses a natural link with the angular momentum operator in quantum mechanics.
We also present a simple numerical experiment to illustrate the result.

In Section~\ref{sec:Symmetry_in_Hagedorn}, we switch our focus to the other, more prevalent formulation of the Gaussian wave packet dynamics of \citet{Ha1980, Ha1998} (see also \citet{Heller-LesHouches}).
The underlying symplectic structure for this formulation is not very prominent: It should be extracted from the so-called first variation equation (i.e., the linearization along the solutions) of classical Hamiltonian system.
The symplectic and Hamiltonian nature of the first variation equation, summarized in Appendix~\ref{sec:GeomOfFirstVarEq}, helps us find conserved quantities for the Hagedorn wave packet dynamics in the presence of symmetry, thereby showing a Noether-type theorem for the Hagedorn wave packet dynamics.
Again the main focus is on the rotational symmetry: We find the corresponding conserved quantity, in addition to the classical angular momentum, of the Hagedorn wave packet dynamics.

\section{Symplectic Semiclassical Dynamics}
\label{sec:Symplectic_Semiclassical_Dynamics}
We first give a brief review of the symplectic formulation of the Gaussian wave packet dynamics following our previous work~\cite{OhLe2013}.
\subsection{Symplectic/Hamiltonian Formulation of Quantum Mechanics}
\label{ssec:SymplecticQM}
Let us first give a brief sketch of the symplectic structure for the symplectic/Hamiltonian formulation of the Schr\"odinger equation alluded above.
We will exploit this structure in the next subsection to come up with the corresponding symplectic structure for semiclassical dynamics.

Let $\mathcal{H}$ be a complex (often infinite-dimensional) Hilbert space equipped with a (right-linear) inner product $\ip{\cdot}{\cdot}$ and its induced norm $\norm{\cdot}$.
It is well-known~(see, e.g., \citet[Section~2.2]{MaRa1999}) that the two-form $\Omega$ on $\mathcal{H}$ defined by
\begin{equation*}
  \Omega(\psi_{1}, \psi_{2}) = 2\hbar \Im\ip{\psi_{1}}{\psi_{2}}
\end{equation*}
is a symplectic form, and hence $\mathcal{H}$ is a symplectic vector space.
Given a Hamiltonian operator $\hat{H}$ on $\mathcal{H}$, we may write the expectation value of the Hamiltonian $\texval{\hat{H}}$ as
\begin{equation*}
 \texval{\hat{H}}(\psi) \defeq \texval{\psi, \hat{H}\psi},
\end{equation*}
which defines a real-valued function\footnote{The domain of the Hamiltonian operator $\hat{H}$ is in general not the whole $\mathcal{H}$ and so $\texval{\hat{H}}$ is not defined on the whole $\mathcal{H}$, but we do not delve into these issues here.} on $\mathcal{H}$.
Then the corresponding Hamiltonian flow
\begin{equation*}
  X_{\exval{\hat{H}}}(\psi) = (\psi, \dot{\psi}) \in T\mathcal{H} \cong \mathcal{H} \times \mathcal{H}
\end{equation*}
on $\mathcal{H}$ defined by
\begin{equation*}
  {\bf i}_{X_{\texval{\hat{H}}}} \Omega = {\bf d}\texval{\hat{H}}
\end{equation*}
gives the Schr\"odinger equation
\begin{equation}
  \label{eq:SchroedingerEq-abstract}
  \dot{\psi} = -\frac{{\rm i}}{\hbar}\hat{H} \psi.
\end{equation}

In this paper, $\mathcal{H} = L^{2}(\R^{d})$ and the Hamiltonian operator is
\begin{equation}
  \label{eq:SchroedingerEq-L2}
  \hat{H} = -\frac{\hbar^{2}}{2m} \Delta + V(x),
\end{equation}
and so \eqref{eq:SchroedingerEq-abstract} takes the familiar form
\begin{equation*}
  {\rm i}\hbar\,\pd{}{t}\psi(x,t) = -\frac{\hbar^{2}}{2m} \pd{^{2}}{x_{j}^{2}} \psi(x,t) + V(x)\,\psi(x,t).
\end{equation*}

\subsection{The Gaussian Wave Packet}
We are interested in the dynamics of the Gaussian wave packet given by
\begin{equation}
  \label{eq:chi}
  \chi(y;x) \defeq \exp\braces{ \frac{{\rm i}}{\hbar}\brackets{ \frac{1}{2}(x - q)^{T}(\mathcal{A} + {\rm i}\mathcal{B})(x - q) + p \cdot (x - q) + (\phi + {\rm i} \delta) } },
\end{equation}
where
\begin{equation}
  \label{eq:y}
  y \defeq (q, p, \mathcal{A}, \mathcal{B}, \phi, \delta)  
\end{equation}
parametrizes the Gaussian wave packet; more specifically the Gaussian wave packet $\chi$ depends on the time through the set of parameters $y$.
These parameters live in the following spaces: $(q,p) \in T^{*}\R^{d}$, $\phi \in \mathbb{S}^{1}$, $\delta \in \R$, and $\mathcal{C} \defeq \mathcal{A} + {\rm i}\mathcal{B}$ is a $d \times d$ complex symmetric matrix with a positive-definite imaginary part, i.e., the matrix $\mathcal{C}$ is an element in the {\em Siegel upper half space}~\cite{Si1943} (see also the brief summary in Appendix~\ref{sec:Sigma_d} of this paper) defined by
\begin{equation}
  \label{eq:Sigma_d}
  \Sigma_{d} \defeq 
  \setdef{ \mathcal{C} = \mathcal{A} + {\rm i}\mathcal{B} \in \mathbb{C}^{d\times d} }{ \mathcal{A}, \mathcal{B} \in \text{Sym}_{d}(\R),\, \mathcal{B} > 0 },
\end{equation}
where $\text{Sym}_{d}(\R)$ denotes the set of $d \times d$ real symmetric matrices, and $\mathcal{B} > 0$ means that $\mathcal{B}$ is positive-definite.
It is easy to see that the (real) dimension of $\Sigma_{d}$ is $d(d+1)$.
Since the dynamics of the above Gaussian wave packet is governed by the set of parameters $y$, we are essentially looking at the dynamics in the $(d+1)(d+2)$-dimensional manifold
\begin{equation*}
 \mathcal{M} \defeq T^{*}\R^{d} \times \Sigma_{d} \times \mathbb{S}^{1} \times \R = \{(q, p, \mathcal{A}, \mathcal{B}, \phi, \delta)\}.
\end{equation*}

The normalized version of $\chi$ is given by
\begin{equation}
  \label{eq:psi_0}
  \psi_{0} \defeq \frac{\chi}{\norm{\chi}} = \parentheses{ \frac{\det\mathcal{B}}{(\pi\hbar)^{d}} }^{1/4} \exp\braces{ \frac{{\rm i}}{\hbar}\brackets{ \frac{1}{2}(x - q)^{T}(\mathcal{A} + {\rm i}\mathcal{B})(x - q) + p \cdot (x - q) + \phi } }.
\end{equation}
Ignoring the phase factor $e^{{\rm i}\phi/\hbar}$ in the above expression corresponds to taking the equivalence class (i.e., quantum state) $[\psi_{0}]_{\mathbb{S}^{1}}$ in the projective Hilbert space $\mathbb{P}(L^{2}(\R^{d})) \defeq \mathbb{S}(L^{2}(\R^{d}))/\mathbb{S}^{1}$, where $\mathbb{S}(L^{2}(\R^{d}))$ is the unit sphere in $L^{2}(\R^{d})$, i.e., the set of normalized wave functions, and the quotient is defined by the $\mathbb{S}^{1}$ phase action $\psi \mapsto e^{{\rm i}\theta} \psi$.
Hence a simple representative element in $L^{2}(\R^{d})$ for $[\psi_{0}]_{\mathbb{S}^{1}} \in \mathbb{P}(L^{2}(\R^{d}))$ would be $\psi_{0}$ without the phase:
\begin{equation}
  \label{eq:psi_0-nophase}
  [\psi_{0}]_{\mathbb{S}^{1}} = \brackets{ \parentheses{ \frac{\det\mathcal{B}}{(\pi\hbar)^{d}} }^{1/4} \exp\braces{ \frac{{\rm i}}{\hbar}\brackets{ \frac{1}{2}(x - q)^{T}(\mathcal{A} + {\rm i}\mathcal{B})(x - q) + p \cdot (x - q) } } }_{\mathbb{S}^{1}}.
\end{equation}

\subsection{Hamiltonian Dynamics of Semiclassical Wave Packet}
So how should the parameters $y$ in \eqref{eq:y} evolve in time so that \eqref{eq:chi} gives the best approximation to the solutions of the Schr\"odinger equation~\eqref{eq:SchroedingerEq-L2}?
The key idea due to \citet[Section~II.1]{Lu2008} is to view the Gaussian wave packet~\eqref{eq:chi} as an embedding of $\mathcal{M}$ to $\mathcal{H} \defeq L^{2}(\R^{d})$ defined by
\begin{equation*}
  \iota\colon \mathcal{M} \hookrightarrow \mathcal{H};
  \qquad
  \iota(y) = \chi(y;\cdot)
\end{equation*}
One can then show that $\mathcal{M}$ is a symplectic manifold with symplectic form $\Omega_{\mathcal{M}} \defeq \iota^{*}\Omega$ pulled back from $\mathcal{H} = L^{2}(\R^{d})$.
Similarly, define a Hamiltonian $H\colon \mathcal{M} \to \R$ by the pull-back $H \defeq \iota^{*} \texval{\hat{H}}$, i.e.,
\begin{equation}
  \label{eq:H}
  H(y) = \tip{\chi(y,\cdot)}{\hat{H}\chi(y,\cdot)}.
\end{equation}
See \citet[Section~3.2]{OhLe2013} for their coordinate expressions.
Then we may define a Hamiltonian system on $\mathcal{M}$ by
\begin{equation*}
  {\bf i}_{X_{H}} \Omega_{\mathcal{M}} = {\bf d}H,
\end{equation*}
which gives 
\begin{equation}
  \label{eq:Heller}
  \begin{array}{c}
    \DS
    \dot{q} = \frac{p}{m},
    \qquad
    \dot{p} = -\exval{\nabla{V}},
    \qquad
    \dot{\mathcal{A}} = -\frac{1}{m}(\mathcal{A}^{2} - \mathcal{B}^{2}) - \exval{\nabla^{2}V},
    \qquad
    \dot{\mathcal{B}} = -\frac{1}{m}(\mathcal{A}\mathcal{B} + \mathcal{B}\mathcal{A}),
    \medskip\\
    \DS
    \dot{\phi} = \frac{p^{2}}{2m} - \exval{V}
    - \frac{\hbar}{2m} \tr\mathcal{B}
    + \frac{\hbar}{4} \tr\parentheses{ \mathcal{B}^{-1} \exval{\nabla^{2}V} },
    \qquad
    \dot{\delta} = \frac{\hbar}{2m} \tr\mathcal{A},
  \end{array}
\end{equation}
where $\nabla^{2}V$ is the $d \times d$ Hessian matrix, i.e.,
\begin{equation*}
  (\nabla^{2}V)_{ij} = \pd{^{2}V}{x_{i} \partial x_{j}},
\end{equation*}
and ${\exval{\,\cdot\,}}$ stands for the expectation value with respect to the normalized Gaussian $\psi_{0}$ defined in \eqref{eq:psi_0}, e.g., 
\begin{equation}
  \label{eq:avgV}
  \exval{V}(q, \mathcal{B}) \defeq \ip{ \psi_{0} }{ V\psi_{0} }
  = \sqrt{ \frac{\det \mathcal{B}}{(\pi\hbar)^{d}} } \int_{\R^{d}} V(x) \exp\brackets{ -\frac{1}{\hbar}(x - q)^{T}\mathcal{B}(x - q) } dx.
\end{equation}

However, the above system has an $\mathbb{S}^{1}$ phase symmetry, i.e., the Hamiltonian $H$ in \eqref{eq:H} does not depend on the phase $\phi$ and so is invariant under the phase shift in $\phi \in \mathbb{S}^{1}$.
Therefore, one may apply the Marsden--Weinstein reduction to reduce the system~\eqref{eq:Heller} to a lower-dimensional one.
The $\mathbb{S}^{1}$ phase symmetry gives rise to the momentum map
\begin{equation*}
  {\bf J}_{\!\mathcal{M}}(y) = -\hbar\,\norm{\chi(y;\,\cdot\,)}^{2} = -\hbar\,\sqrt{ \frac{(\pi\hbar)^{d}}{\det \mathcal{B}} }\, \exp\parentheses{ -\frac{2\delta}{\hbar} },
\end{equation*}
and the Marsden--Weinstein reduction~\cite{MaWe1974} (see also \citet[Sections~1.1 and 1.2]{MaMiOrPeRa2007}) yields the reduced space
\begin{equation*}
  {\mathcal{M}}_{\hbar} \defeq {\bf J}_{\!\mathcal{M}}^{-1}(-\hbar)/\mathbb{S}^{1} = T^{*}\R^{d} \times \Sigma_{d}
  = \{ (q, p, \mathcal{A}, \mathcal{B}) \},
\end{equation*}
with the reduced symplectic form
\begin{align}
  \label{eq:Omega-reduced}
  {\Omega}_{\hbar} \defeq -\d{\Theta}_{\hbar}
  &= \d{q_{i}} \wedge \d{p_{i}} + \frac{\hbar}{4}\,\mathcal{B}^{-1}_{jm} \mathcal{B}^{-1}_{nk} \d\mathcal{A}_{jk} \wedge \d\mathcal{B}_{mn}
  \nonumber\\
  &= \d{q_{i}} \wedge \d{p_{i}} + \frac{\hbar}{4}\, \d\mathcal{B}^{-1}_{jk} \wedge \d\mathcal{A}_{jk}.
\end{align}

\begin{remark}
  The second term in the above symplectic form is essentially the imaginary part of the Hermitian metric~\cite{Si1943}
  \begin{equation}
    \label{eq:metric-Sigma_d}
    g_{\Sigma_{d}} \defeq \tr\parentheses{ \mathcal{B}^{-1} \d\mathcal{C}\,\mathcal{B}^{-1} \d\bar{\mathcal{C}}\, }
    = \mathcal{B}^{-1}_{ik} \mathcal{B}^{-1}_{lj} \d\mathcal{C}_{kl} \otimes \d\bar{\mathcal{C}}_{ij}
  \end{equation}
  on the Siegel upper half space $\Sigma_{d}$, i.e.,
  \begin{equation}
    \label{eq:symplectic_form-Sigma_d}
    \Im g_{\Sigma_{d}} = -\mathcal{B}^{-1}_{ik} \mathcal{B}^{-1}_{lj} \d\mathcal{A}_{ij} \wedge \d\mathcal{B}_{kl},
  \end{equation}
  and this gives a symplectic structure on $\Sigma_{d}$.
\end{remark}

The corresponding Poisson bracket $\{\cdot,\cdot\}_{\hbar}$ is given as follows: For any $F, G \in C^{\infty}({\mathcal{M}}_{\hbar})$,
\begin{align}
  \label{eq:Poisson_bracket}
  \{F, G\}_{\hbar} &= {\Omega}_{\hbar}(X_{F}, X_{G})
  \nonumber\\
  &= \pd{F}{q_{i}} \pd{G}{p_{i}} - \pd{G}{q_{i}} \pd{F}{p_{i}}
  + \frac{4}{\hbar}\parentheses{
    \pd{F}{\mathcal{B}^{-1}_{jk}} \pd{G}{\mathcal{A}_{jk}} - \pd{G}{\mathcal{B}^{-1}_{jk}} \pd{F}{\mathcal{A}_{jk}}
  }.
\end{align}
Then the Gaussian wave packet dynamics~\eqref{eq:Heller} is reduced to the Hamiltonian system
\begin{equation}
  \label{eq:HamiltonianSystem-X_h}
  {\bf i}_{X_{H_{\hbar}}} {\Omega}_{\hbar} = {\bf d}H_{\hbar}
\end{equation}
with the reduced Hamiltonian
\begin{equation}
  \label{eq:H-reduced}
  H_{\hbar} = \frac{p^{2}}{2m} + \frac{\hbar}{4m}\tr\brackets{ \mathcal{B}^{-1}(\mathcal{A}^{2} + \mathcal{B}^{2}) } + \exval{V}(q, \mathcal{B}).
\end{equation}
Equivalently, in terms of the Poisson bracket~\eqref{eq:Poisson_bracket}, one has the system
\begin{equation*}
  \dot{w} = \{w, H_{\hbar}\}_{\hbar}
\end{equation*}
where
\begin{equation*}
  w \defeq (q, p, \mathcal{A}, \mathcal{B}) \in \mathcal{M}_{\hbar}.
\end{equation*}
As a result, \eqref{eq:HamiltonianSystem-X_h} gives the reduced set of semiclassical equations:
\begin{equation}
  \label{eq:Heller-reduced}
  \dot{q} = \frac{p}{m},
  \qquad
  \dot{p} = -{\exval{ \nabla{V} }},
  \qquad
  \dot{\mathcal{A}} = -\frac{1}{m}(\mathcal{A}^{2} - \mathcal{B}^{2}) - {\exval{ \nabla^{2}V }},
  \qquad
  \dot{\mathcal{B}} = -\frac{1}{m}(\mathcal{A}\mathcal{B} + \mathcal{B}\mathcal{A}).
\end{equation}
One obstacle in applying the above set of equations to practical problems is that the integral \eqref{eq:avgV} defining the potential term $\exval{V}$ in the Hamiltonian~\eqref{eq:H-reduced} cannot be evaluated explicitly unless the potential $V(x)$ takes fairly simple forms such as polynomials.
Therefore we evaluate ${\exval{V}}$ as an asymptotic expansion (see \cite[Section~7]{OhLe2013}) to find
\begin{equation}
  \label{eq:H-asymptotic-reduced}
  H_{\hbar} = H_{\hbar}^{1} + O(\hbar^{2})
  \quad\text{with}\quad
  H_{\hbar}^{1} \defeq \frac{p^{2}}{2m} + V(q)
    + \frac{\hbar}{4}\tr\brackets{ \mathcal{B}^{-1}\parentheses{ \frac{\mathcal{A}^{2} + \mathcal{B}^{2}}{m} + \nabla^{2}V(q) } }.
\end{equation}
Notice that the Hamiltonian is split into the classical one and a semiclassical correction proportional to $\hbar$.
Then the Hamiltonian system ${\bf i}_{X_{H_{\hbar}^{1}}} {\Omega}_{\mathcal{M}} = {\bf d}H_{\hbar}^{1}$ gives an asymptotic version of \eqref{eq:Heller-reduced}:
\begin{equation}
  \label{eq:Heller-asymptotic-reduced}
  \begin{array}{c}
    \DS
    \dot{q} = \frac{p}{m},
    \qquad
    \dot{p} = -\pd{}{q}\brackets{ V(q) + \frac{\hbar}{4} \tr\parentheses{ \mathcal{B}^{-1} \nabla^{2}V(q) } },
    \medskip\\
    \DS
    \dot{\mathcal{A}} = -\frac{1}{m}(\mathcal{A}^{2} - \mathcal{B}^{2}) - \nabla^{2}V(q),
    \qquad
    \dot{\mathcal{B}} = -\frac{1}{m}(\mathcal{A}\mathcal{B} + \mathcal{B}\mathcal{A}).
  \end{array}
\end{equation}
Writing $\mathcal{C} = \mathcal{A} + {\rm i}\mathcal{B}$, the last two equations above for $\mathcal{A}$ and $\mathcal{B}$ are combined into the following single Riccati-type equation:
\begin{equation}
  \label{eq:Riccati-C}
  \dot{\mathcal{C}} = -\frac{1}{m}\mathcal{C}^{2} - \nabla^{2}V(q).
\end{equation}
The main difference from the equations of \citet{He1975a,He1976b,He1981} and \citet{Ha1980,Ha1998} is that the equation for the momentum $p$ is not the classical one any more: It has a quantum correction term proportional to $\hbar$.
In fact, due to this correction term, the above set of equation~\eqref{eq:Heller-asymptotic-reduced} realizes semiclassical tunneling; see \citet[Section~9]{OhLe2013}.

\begin{remark}
  When $V(x)$ is quadratic, \eqref{eq:Heller} and \eqref{eq:Heller-asymptotic-reduced} recover the equations of \citet{He1975a,He1976b,He1981}; hence \eqref{eq:Heller} and \eqref{eq:Heller-asymptotic-reduced} are generalizations of the formulation of \citet{He1975a,He1976b,He1981} retaining its key property that it gives an exact solution to the Schr\"odinger equation~\eqref{eq:SchroedingerEq-L2} when $V(x)$ is quadratic.
  See also \citet[Section~7.1]{OhLe2013}.
\end{remark}

\section{Symmetry and Conservation Laws in Semiclassical Dynamics}
\label{sec:Symmetry_in_Semiclassical_Dynamics}
\subsection{Semiclassical Angular Momentum}
Suppose that the quantum mechanical system in question has a rotational symmetry, i.e., the potential $V\colon \R^{d} \to \R$ is invariant under the action of the Lie group $\SO(d)$ on the configuration space $\R^{d}$.
Does the semiclassical Hamiltonian system~\eqref{eq:Heller-reduced} or \eqref{eq:Heller-asymptotic-reduced} inherit the $\SO(d)$-symmetry?
If so, what kind of conservation laws follow from Noether's theorem?
The main result in this section answers these questions as follows:
\begin{theorem}
  \label{thm:semiclassical_angular_momentum}
  Let $\varphi\colon \SO(d) \times \R^{d} \to \R^{d}$ be the natural action of the rotation group $\SO(d)$ on the configuration space $\R^{d}$, i.e., for any $R \in \SO(d)$,
  \begin{equation}
    \label{eq:varphi}
    \varphi_{R}\colon \R^{d} \to \R^{d};
    \quad
    q \mapsto R q,
  \end{equation}
  and suppose that the potential $V \in C^{2}(\R^{d})$ is invariant under the $\SO(d)$-action $\varphi$ in \eqref{eq:varphi}, i.e.,
  \begin{equation}
    \label{eq:V-SO(d)_symmetry}
    V \circ \varphi_{R} = V.
  \end{equation}
  Then the quantity ${\bf J}_{\hbar}\colon {\mathcal{M}}_{\hbar} \to \so(d)^{*}$ defined by
  \begin{equation}
    \label{eq:semiclassical_angular_momentum}
    {\bf J}_{\hbar}(q, p, \mathcal{A}, \mathcal{B}) \defeq q \diamond p - \frac{\hbar}{2}[\mathcal{B}^{-1},\mathcal{A}],
  \end{equation}
  with $q \diamond p$ denoting (see, e.g., \citet[Remark~6.3.3 on p.~150]{Ho2011b})
  \begin{equation*}
    (q \diamond p)_{ij} \defeq q_{j}p_{i} - q_{i}p_{j},
  \end{equation*}
  is conserved along the solutions of the semiclassical system~\eqref{eq:Heller-reduced} or \eqref{eq:Heller-asymptotic-reduced}.
\end{theorem}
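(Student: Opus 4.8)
The plan is to invoke Noether's theorem (the momentum-map version) for the Hamiltonian system $\mathbf{i}_{X_{H_\hbar}}\Omega_\hbar = \mathbf{d}H_\hbar$ on $\mathcal{M}_\hbar$. The route has three ingredients: (i) lift the $\SO(d)$-action $\varphi$ from $\R^d$ to a symplectic action $\Phi$ on $\mathcal{M}_\hbar = T^*\R^d \times \Sigma_d$; (ii) check that this lifted action leaves the Hamiltonian $H_\hbar$ (equivalently $H_\hbar^1$) invariant; (iii) compute the momentum map $\mathbf{J}_\hbar$ for the action and verify it equals the claimed expression \eqref{eq:semiclassical_angular_momentum}. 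Conservation then follows from the standard fact that momentum maps of symmetries are constants of motion.

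First I would write down the lifted action. On $T^*\R^d$ it is the cotangent lift of \eqref{eq:varphi}, namely $(q,p)\mapsto (Rq, Rp)$. On the Siegel space $\Sigma_d$ the natural action compatible with the Gaussian \eqref{eq:chi} is $\mathcal{C}\mapsto R\,\mathcal{C}\,R^T$ — i.e. $\mathcal{A}\mapsto R\mathcal{A}R^T$, $\mathcal{B}\mapsto R\mathcal{B}R^T$ — which comes from substituting $x\mapsto R^{-1}x$ in the quadratic form $(x-q)^T\mathcal{C}(x-q)$; one should note this preserves $\Sigma_d$ since it preserves symmetry and positive-definiteness. I would then verify that $\Phi_R$ is symplectic with respect to $\Omega_\hbar$ in \eqref{eq:Omega-reduced}: the $T^*\R^d$ part is the canonical cotangent lift, hence symplectic automatically, and for the $\Sigma_d$ part one checks directly that the two-form $\tfrac{\hbar}{4}\,\d\mathcal{B}^{-1}_{jk}\wedge\d\mathcal{A}_{jk}$ (equivalently $\Im g_{\Sigma_d}$ from \eqref{eq:symplectic_form-Sigma_d}) is invariant under congruence $\mathcal{C}\mapsto R\mathcal{C}R^T$ — this uses $R^{-1}=R^T$ and $\det R = 1$ and is essentially the statement that the Siegel metric $g_{\Sigma_d}$ is $\SO(d)$-invariant. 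Invariance of $H_\hbar$ is then the combination of: $p^2$ invariant under $p\mapsto Rp$; $\tr[\mathcal{B}^{-1}(\mathcal{A}^2+\mathcal{B}^2)]$ invariant under congruence (cyclicity of trace plus $R^TR=I$); and $\exval{V}(q,\mathcal{B})$ invariant because the Gaussian integral \eqref{eq:avgV} transforms into itself after the change of variables $x\mapsto Rx$, using \eqref{eq:V-SO(d)_symmetry} and $\det R=1$. For $H_\hbar^1$ the argument is the same with $V(q)$ and $\tr(\mathcal{B}^{-1}\nabla^2V(q))$ in place of $\exval{V}$, the latter using $\nabla^2V(Rq) = R\,\nabla^2V(q)\,R^T$ which follows by differentiating \eqref{eq:V-SO(d)_symmetry}.

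Next I would compute the momentum map by the defining relation $\mathbf{i}_{X_\xi}\Omega_\hbar = \mathbf{d}\langle\mathbf{J}_\hbar,\xi\rangle$ for each $\xi\in\so(d)$, where $X_\xi$ is the infinitesimal generator: $X_\xi(q,p,\mathcal{A},\mathcal{B}) = (\xi q,\,\xi p,\,\xi\mathcal{A}+\mathcal{A}\xi^T,\,\xi\mathcal{B}+\mathcal{B}\xi^T)$ with $\xi^T=-\xi$. Splitting $\Omega_\hbar$ into its $T^*\R^d$ part and its $\Sigma_d$ part, the first contributes the classical angular momentum $q\diamond p$ (the well-known cotangent-lift momentum map), while the second contributes the $\Sigma_d$-momentum map, which I expect to be $-\tfrac{\hbar}{2}[\mathcal{B}^{-1},\mathcal{A}]$. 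I would obtain this by plugging $X_\xi$ into $\tfrac{\hbar}{4}\d\mathcal{B}^{-1}_{jk}\wedge\d\mathcal{A}_{jk}$, contracting, and matching the result with $\d$ of a function of the form $\xi_{ij}(\mathbf{J}_{\Sigma})_{ij}$; the identity $\delta(\mathcal{B}^{-1}) = -\mathcal{B}^{-1}(\delta\mathcal{B})\mathcal{B}^{-1}$ together with trace manipulations should make the candidate $\langle\mathbf{J}_\Sigma,\xi\rangle = -\tfrac{\hbar}{2}\tr(\xi[\mathcal{B}^{-1},\mathcal{A}])$ (up to sign/index conventions) fall out. Finally, invoking Noether's theorem — that an equivariant momentum map of a symmetry of a Hamiltonian system is conserved along the flow (see, e.g., \citet[Section~4.2]{MaRa1999}) — gives $\dot{\mathbf{J}}_\hbar = 0$, completing the proof. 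As a sanity cross-check one could also verify $\{(\mathbf{J}_\hbar)_{ij}, H_\hbar^1\}_\hbar = 0$ directly using the Poisson bracket \eqref{eq:Poisson_bracket} and the equations \eqref{eq:Heller-asymptotic-reduced}, which is the "hands-on" alternative to the geometric argument.

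**The main obstacle** I anticipate is the bookkeeping in step (iii): correctly tracking index placements and signs in the Siegel-space symplectic form when contracting with the infinitesimal generator, since $\Omega_\hbar$ is written with the $\mathcal{B}^{-1}$ factors and one must carefully use $\delta(\mathcal{B}^{-1}) = -\mathcal{B}^{-1}(\delta\mathcal{B})\mathcal{B}^{-1}$ and the (anti)symmetry of $\xi$ to recognize the result as an exact one-form and read off $\mathbf{J}_\hbar$. The invariance checks (steps i–ii) and the conclusion (Noether) are essentially routine; it is matching the $\Sigma_d$-momentum map to the commutator $[\mathcal{B}^{-1},\mathcal{A}]$ that requires care. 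A secondary point worth stating cleanly is why the computation for the exact Hamiltonian $H_\hbar$ and for its $O(\hbar^2)$-truncation $H_\hbar^1$ both go through — the key being that the $\SO(d)$-invariance of $\exval{V}(q,\mathcal{B})$ is exact, not just asymptotic, so no error terms interfere with conservation for either system.
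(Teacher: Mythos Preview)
Your proposal is correct and follows the same overall architecture as the paper's proof: define the lifted $\SO(d)$-action $\Gamma_R(q,p,\mathcal{A},\mathcal{B})=(Rq,Rp,R\mathcal{A}R^T,R\mathcal{B}R^T)$ on $\mathcal{M}_\hbar$ (the paper's Proposition~\ref{prop:SO(d)-action}), verify symplecticity of $\Gamma$ and invariance of $H_\hbar$ and $H_\hbar^1$ (the paper's Lemma~\ref{lemma:SO(d)-symmetry}, with essentially the same change-of-variables argument for $\exval{V}$ that you sketch), compute the momentum map, and invoke Noether's theorem.

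The one noteworthy difference is in how the momentum map is extracted. You propose to solve the defining equation $\mathbf{i}_{\xi_{\mathcal{M}_\hbar}}\Omega_\hbar = \mathbf{d}J_\hbar(\xi)$ directly, which requires integrating a one-form on $\Sigma_d$ and is precisely the bookkeeping obstacle you flag. The paper instead exploits the exactness $\Omega_\hbar = -\mathbf{d}\Theta_\hbar$ together with the fact that $\Gamma$ preserves the primitive $\Theta_\hbar$, so that the momentum map is simply $J_\hbar(\xi)(w) = \ip{\Theta_\hbar(w)}{\xi_{\mathcal{M}_\hbar}(w)}$; plugging in the infinitesimal generator $\xi_{\mathcal{M}_\hbar} = \xi q\cdot\partial_q + \xi p\cdot\partial_p + [\xi,\mathcal{A}]_{jk}\partial_{\mathcal{A}_{jk}} + [\xi,\mathcal{B}]_{jk}\partial_{\mathcal{B}_{jk}}$ yields $p^T\xi q - \tfrac{\hbar}{4}\tr(\mathcal{B}^{-1}[\xi,\mathcal{A}]) = \ip{q\diamond p - \tfrac{\hbar}{2}[\mathcal{B}^{-1},\mathcal{A}]}{\xi}$ in one line. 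This shortcut sidesteps your anticipated difficulty entirely---no $\delta(\mathcal{B}^{-1})$ manipulations or integration are needed---so you may want to adopt it.
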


Note that setting $\hbar = 0$ in \eqref{eq:semiclassical_angular_momentum} recovers the classical angular momentum ${\bf J}_{0} = q \diamond p$; so ${\bf J}_{\hbar}$ is considered to be a semiclassical extension of the angular momentum; so we call ${\bf J}_{\hbar}$ the {\em semiclassical angular momentum}.


The first step towards the proof of Theorem~\ref{thm:semiclassical_angular_momentum} is to identify the natural $\SO(d)$-action on the symplectic manifold ${\mathcal{M}}_{\hbar} = T^{*}\R^{d} \times \Sigma_{d}$ induced by the action $\varphi$ on $\R^{d}$ defined above in \eqref{eq:varphi}.
For the cotangent bundle component $T^{*}\R^{d}$, the natural choice is the cotangent lift $\Phi\colon \SO(d) \times T^{*}\R^{d} \to T^{*}\R^{d}$ defined by
\begin{equation}
  \label{eq:Phi-SO(d)}
  \Phi_{R} \defeq T^{*}\varphi_{R^{-1}}\colon T^{*}\R^{d} \to T^{*}\R^{d}; 
  \qquad
  (q, p) \mapsto (R q, R p).
\end{equation}

What is the natural induced action on $\Sigma_{d}$ induced by $\varphi$?
We would like to define the action so that the Gaussian wave packet quantum state $[\psi_{0}]_{\mathbb{S}^{1}}$ defined in \eqref{eq:psi_0-nophase} is invariant under the action on all of its variables $(x,q,p,\mathcal{A},\mathcal{B})$.
Notice first that the natural $\SO(d)$-action on the variables $(x, q, p)$ is, for any $R \in \SO(d)$,
\begin{equation*}
  (x, q, p) \mapsto (R x, R q, R p).
\end{equation*}
In order for the state $[\psi_{0}]_{\mathbb{S}^{1}}$ to be invariant under the action of $\SO(d)$, one may define an action of $\SO(d)$ on $\mathcal{A} + {\rm i}\mathcal{B}$ as follows:
\begin{equation}
  \label{eq:SO(d)-action_Sigma_d}
  \gamma_{R}\colon \Sigma_{d} \to \Sigma_{d};
  \quad
  \mathcal{A} + {\rm i}\mathcal{B} \mapsto R(\mathcal{A} + {\rm i}\mathcal{B})R^{T}.
\end{equation}
As a result, we have a natural $\SO(d)$-action on the symplectic manifold ${\mathcal{M}}_{\hbar} = T^{*}\R^{d} \times \Sigma_{d}$:

\begin{proposition}
  \label{prop:SO(d)-action}
  Define an $\SO(d)$-action $\Gamma\colon \SO(d) \times {\mathcal{M}}_{\hbar} \to {\mathcal{M}}_{\hbar}$ by
  \begin{equation}
    \label{eq:Gamma}
    \Gamma_{R}\colon {\mathcal{M}}_{\hbar} \to {\mathcal{M}}_{\hbar};
    \qquad
    (q, p, \mathcal{A}, \mathcal{B}) \mapsto (R q, R p, R\mathcal{A}R^{T}, R\mathcal{B}R^{T})
  \end{equation}
  for any $R \in \SO(d)$. Then:
  \begin{enumerate}[(i)]
  \item The Gaussian wave packet state $[\psi_{0}]_{\mathbb{S}^{1}}$ in \eqref{eq:psi_0-nophase} is invariant under the action
    \begin{align*}
      \varphi_{R} \times \Gamma_{R}\colon & \R^{d} \times {\mathcal{M}}_{\hbar} \to \R^{d} \times {\mathcal{M}}_{\hbar};
      \\
      & (x, q, p, \mathcal{A}, \mathcal{B}) \mapsto (R x, R q, R p, R\mathcal{A}R^{T}, R\mathcal{B}R^{T}).
    \end{align*}
  \item The action $\Gamma$ is symplectic with respect to the symplectic form ${\Omega}_{\hbar}$ defined in \eqref{eq:Omega-reduced}, i.e., $\Gamma_{R}^{*} {\Omega}_{\hbar} = {\Omega}_{\hbar}$ for any $R \in \SO(d)$.
  \end{enumerate}
\end{proposition}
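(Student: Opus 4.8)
The plan is to dispatch the two claims by direct verification, after a preliminary check that the map $\gamma_R$ of \eqref{eq:SO(d)-action_Sigma_d} is actually well defined. For $\mathcal{C} = \mathcal{A} + {\rm i}\mathcal{B} \in \Sigma_{d}$ and $R \in \SO(d)$, the matrix $R\mathcal{C}R^{T}$ is complex symmetric because $(R\mathcal{C}R^{T})^{T} = R\mathcal{C}R^{T}$, and its imaginary part $R\mathcal{B}R^{T}$ is positive-definite because congruence by an invertible matrix preserves positive-definiteness; moreover $\gamma_{R_{1}}\circ\gamma_{R_{2}} = \gamma_{R_{1}R_{2}}$, so $\gamma$, and hence $\Gamma = \Phi\times\gamma$ (recalling that $\Phi$ in \eqref{eq:Phi-SO(d)} is the cotangent lift and therefore an action), is a genuine $\SO(d)$-action on $\mathcal{M}_{\hbar}$. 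Throughout I will use $R^{-1} = R^{T}$, $\det R = 1$, and in particular $(R\mathcal{B}R^{T})^{-1} = R\mathcal{B}^{-1}R^{T}$.

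For part (i), I would substitute $(x,q,p,\mathcal{A},\mathcal{B}) \mapsto (Rx, Rq, Rp, R\mathcal{A}R^{T}, R\mathcal{B}R^{T})$ into the representative \eqref{eq:psi_0-nophase}. The difference $x-q$ becomes $R(x-q)$, so by $R^{T}R = I$ the quadratic form $(x-q)^{T}(\mathcal{A}+{\rm i}\mathcal{B})(x-q)$ and the linear term $p\cdot(x-q)$ are both unchanged, and the prefactor is unchanged because $\det(R\mathcal{B}R^{T}) = \det\mathcal{B}$. Thus the representative Gaussian is pointwise invariant, so the state $[\psi_{0}]_{\mathbb{S}^{1}}$ is fixed by $\varphi_{R}\times\Gamma_{R}$. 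Equivalently, this says that $\iota\circ\Gamma_{R}$ and $U_{R}\circ\iota$ agree up to the $\mathbb{S}^{1}$ phase, where $U_{R}$ is the unitary $\psi\mapsto\psi\circ\varphi_{R^{-1}}$ on $L^{2}(\R^{d})$; I may record this since it yields (ii) as well, $\Omega$ being unitarily invariant.

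For part (ii), the cleanest route is a direct computation from the coordinate form \eqref{eq:Omega-reduced}. Pulling back the canonical part, $\Gamma_{R}^{*}(\d q_{i}\wedge\d p_{i}) = R_{ij}R_{ik}\,\d q_{j}\wedge\d p_{k} = (R^{T}R)_{jk}\,\d q_{j}\wedge\d p_{k} = \d q_{j}\wedge\d p_{j}$, which is the usual invariance of the canonical form under cotangent lifts. For the Siegel part, $\Gamma_{R}^{*}\d\mathcal{A}_{jk} = R_{jm}R_{kn}\,\d\mathcal{A}_{mn}$ and $\Gamma_{R}^{*}\d\mathcal{B}^{-1}_{jk} = R_{jp}R_{kq}\,\d\mathcal{B}^{-1}_{pq}$, and contracting the four rotation factors in pairs through $R^{T}R = I$ gives $\Gamma_{R}^{*}(\d\mathcal{B}^{-1}_{jk}\wedge\d\mathcal{A}_{jk}) = \d\mathcal{B}^{-1}_{mn}\wedge\d\mathcal{A}_{mn}$; summing the two pieces gives $\Gamma_{R}^{*}\Omega_{\hbar} = \Omega_{\hbar}$. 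Alternatively, as in the Remark, the Siegel metric $g_{\Sigma_{d}}$ of \eqref{eq:metric-Sigma_d} is visibly invariant under $\mathcal{C}\mapsto R\mathcal{C}R^{T}$, so its imaginary part \eqref{eq:symplectic_form-Sigma_d}, which up to sign is the second term of $\Omega_{\hbar}$, is too.

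I do not anticipate a genuine obstacle: everything rests on $R^{T}R = I$ and $\det R = 1$. The only points that warrant a moment's care are the preliminary check that $\gamma_{R}$ preserves $\Sigma_{d}$, and, in part (ii), the index bookkeeping — notably that it is $\mathcal{B}^{-1}$ (not $\mathcal{B}$) that enters $\Omega_{\hbar}$, and that $\mathcal{B}^{-1}$ transforms in exactly the same way as $\mathcal{B}$ precisely because $R^{-1} = R^{T}$.
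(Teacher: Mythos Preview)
Your proposal is correct and is precisely the ``simple calculations'' the paper invokes; the paper's own proof is the single line ``Follows from simple calculations,'' so you have merely made explicit what the paper leaves implicit. Your additional remarks (well-definedness of $\gamma_{R}$ on $\Sigma_{d}$, the alternative via the Siegel metric, and the unitary reformulation) are all sound and go a bit beyond what the paper states, but the core approach is identical.
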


\begin{proof}
  Follows from simple calculations.
\end{proof}

It is now easy to see that the $\SO(d)$-symmetry~\eqref{eq:V-SO(d)_symmetry} of the potential $V$ implies that of the semiclassical Hamiltonian~\eqref{eq:H-reduced} or \eqref{eq:H-asymptotic-reduced} in the following sense:
\begin{lemma}
  \label{lemma:SO(d)-symmetry}
  If the potential $V \in C^{2}(\R^{d})$ is invariant under the $\SO(d)$-action $\varphi$ in \eqref{eq:varphi}, i.e., \eqref{eq:V-SO(d)_symmetry} is satisfied, then the Hamiltonians $H_{\hbar}$ and $H_{\hbar}^{1}$, \eqref{eq:H-reduced} and \eqref{eq:H-asymptotic-reduced} respectively, are invariant under the $\SO(d)$-action~\eqref{eq:Gamma}, i.e.,
  \begin{equation*}
    H_{\hbar} \circ \Gamma_{R} = H_{\hbar},
    \qquad
    H_{\hbar}^{1} \circ \Gamma_{R} = H_{\hbar}^{1}.
  \end{equation*}
\end{lemma}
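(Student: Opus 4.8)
The plan is to verify the two invariance identities directly by substituting the transformed variables into the explicit formulas for $H_{\hbar}$ and $H_{\hbar}^{1}$ and using orthogonality of $R$ together with the cyclic property of the trace. First I would treat the asymptotic Hamiltonian $H_{\hbar}^{1}$ in \eqref{eq:H-asymptotic-reduced}, since it is slightly simpler. Under $\Gamma_R$ we have $p \mapsto Rp$, $q \mapsto Rq$, $\mathcal{A} \mapsto R\mathcal{A}R^{T}$, $\mathcal{B} \mapsto R\mathcal{B}R^{T}$, hence $\mathcal{B}^{-1} \mapsto R\mathcal{B}^{-1}R^{T}$. The kinetic term $p^{2}/2m = |p|^{2}/2m$ is manifestly invariant because $|Rp| = |p|$. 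The term $V(q)$ is invariant precisely by the hypothesis \eqref{eq:V-SO(d)_symmetry}, $V(Rq) = V(q)$. For the $O(\hbar)$ correction, I would note that $\mathcal{A}^{2} \mapsto R\mathcal{A}^{2}R^{T}$, $\mathcal{B}^{2} \mapsto R\mathcal{B}^{2}R^{T}$, so $(\mathcal{A}^{2}+\mathcal{B}^{2})/m \mapsto R[(\mathcal{A}^{2}+\mathcal{B}^{2})/m]R^{T}$, and the key remaining point is that the Hessian transforms covariantly, $\nabla^{2}V(Rq) = R\,\nabla^{2}V(q)\,R^{T}$; this follows by differentiating \eqref{eq:V-SO(d)_symmetry} twice and using the chain rule. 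Then the bracketed matrix inside the trace maps to $R(\cdot)R^{T}$, so $\mathcal{B}^{-1}(\cdots) \mapsto R\mathcal{B}^{-1}(\cdots)R^{T}$, and $\tr(R M R^{T}) = \tr(M)$ by cyclicity. This gives $H_{\hbar}^{1}\circ\Gamma_R = H_{\hbar}^{1}$.

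For the exact Hamiltonian $H_{\hbar}$ in \eqref{eq:H-reduced}, the first two terms are handled exactly as above: $p^{2}/2m$ is invariant and $\tr[\mathcal{B}^{-1}(\mathcal{A}^{2}+\mathcal{B}^{2})] \mapsto \tr[R\mathcal{B}^{-1}(\mathcal{A}^{2}+\mathcal{B}^{2})R^{T}] = \tr[\mathcal{B}^{-1}(\mathcal{A}^{2}+\mathcal{B}^{2})]$. The only genuinely new point is the invariance of the exact averaged potential $\exval{V}(q,\mathcal{B})$ defined in \eqref{eq:avgV}. I would show $\exval{V}(Rq, R\mathcal{B}R^{T}) = \exval{V}(q,\mathcal{B})$ by changing variables in the Gaussian integral: writing $x = R x'$ we have $dx = dx'$ (since $|\det R| = 1$), the exponent becomes $-\frac{1}{\hbar}(Rx' - Rq)^{T}(R\mathcal{B}R^{T})(Rx' - Rq) = -\frac{1}{\hbar}(x'-q)^{T}\mathcal{B}(x'-q)$ using $R^{T}R = I$, the integrand $V(Rx') = V(x')$ by \eqref{eq:V-SO(d)_symmetry}, and the prefactor $\det(R\mathcal{B}R^{T}) = \det\mathcal{B}$ is unchanged. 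Hence $\exval{V}\circ\Gamma_R = \exval{V}$, completing the proof for $H_{\hbar}$.

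I do not expect any serious obstacle here: the proof is essentially bookkeeping with orthogonal conjugation and a change of variables. The one point that deserves explicit mention rather than being swept under "simple calculations" is the covariance of the Hessian, $\nabla^{2}(V\circ\varphi_R)(q) = R^{T}(\nabla^{2}V)(Rq)R$, which combined with \eqref{eq:V-SO(d)_symmetry} yields $\nabla^{2}V(Rq) = R(\nabla^{2}V)(q)R^{T}$; this is what makes the $\exval{\nabla^{2}V}$ and $\nabla^{2}V(q)$ terms transform the same way as $\mathcal{A}^{2}$ and $\mathcal{B}^{2}$. For the exact Hamiltonian one could alternatively bypass the Hessian entirely and argue solely via the change-of-variables identity for $\exval{V}$, which is perhaps the cleanest route. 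Either way the result is immediate once these transformation rules are recorded.
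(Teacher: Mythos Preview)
Your proposal is correct and follows essentially the same route as the paper: the non-potential terms are handled by orthogonality of $R$ and cyclicity of the trace, $\exval{V}$ by the change of variables $x = R\xi$ in the Gaussian integral, and the Hessian term via the chain rule applied to \eqref{eq:V-SO(d)_symmetry}. If anything, your explicit statement of the Hessian covariance $\nabla^{2}V(Rq) = R\,\nabla^{2}V(q)\,R^{T}$ is cleaner than the paper's wording, which speaks loosely of ``invariance of the Laplacian'' for what is in fact the Hessian matrix; your version makes transparent why the conjugations by $R$ cancel inside the trace.
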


\begin{proof}
  The invariance of the terms that do not involve the potential $V$ follows from simple calculations.
  For those terms with the potential $V$, we first have
  \begin{align*}
    {\exval{V}}(R q, R\mathcal{B}R^{T})
    &= \sqrt{ \frac{\det \mathcal{B}}{(\pi\hbar)^{d}} } \int_{\R^{d}} V(x) \exp\brackets{ -\frac{1}{\hbar}(x - R q)^{T}R\mathcal{B}R^{T}(x - R q) } dx
    \\
    &= \sqrt{ \frac{\det \mathcal{B}}{(\pi\hbar)^{d}} } \int_{\R^{d}} \underbrace{V(R\xi)}_{V(\xi)} \exp\brackets{ -\frac{1}{\hbar}(\xi - q)^{T}\mathcal{B}(\xi - q) } d\xi
    \\
    &= {\exval{V}}(q, \mathcal{B}),
  \end{align*}
  where we set $\xi = R^{T} x$.
  The invariance of the Laplacian $\nabla^{2}V$ follows from the $\SO(d)$-invariance of the Laplacian and the potential $V$ itself:
  \begin{equation*} 
    (\nabla^{2} V) \circ \varphi_{R} = \nabla^{2}(V \circ \varphi_{R}) = \nabla^{2}V. \qedhere
  \end{equation*}
\end{proof}

Now we are ready to prove Theorem~\ref{thm:semiclassical_angular_momentum}:
\begin{proof}[Proof of Theorem~\ref{thm:semiclassical_angular_momentum}]
  We show that the semiclassical angular momentum~\eqref{eq:semiclassical_angular_momentum} is in fact the momentum map corresponding to the action $\Gamma$ defined in \eqref{eq:Gamma}.
  Then the result follows from Noether's theorem (see, e.g., \citet[Theorem~11.4.1 on p.~372]{MaRa1999}), because Proposition~\ref{prop:SO(d)-action} guarantees that $\Gamma$ is a symplectic $\SO(d)$-action on the symplectic manifold ${\mathcal{M}}_{\hbar} = T^{*}\R^{d} \times \Sigma_{d}$, and the assumption~\eqref{eq:V-SO(d)_symmetry} on the $\SO(d)$-symmetry of the potential $V$ along with Lemma~\ref{lemma:SO(d)-symmetry} implies that the Hamiltonian $H_{\hbar}$ and $H_{\hbar}^{1}$ are both invariant under the action.

  Let $\xi$ be an arbitrary element in the Lie algebra $\so(d)$.
  Then the corresponding infinitesimal generator is then given by
  \begin{equation*}
    \xi_{{\mathcal{M}}_{\hbar}}(w)
    \defeq \left. \od{}{\eps} \Gamma_{\exp(\eps\xi)}(w) \right|_{\eps=0}
    = \xi q \cdot \pd{}{q}
    + \xi p \cdot \pd{}{p}
    + [\xi,\mathcal{A}]_{jk} \pd{}{\mathcal{A}_{jk}}
    + [\xi,\mathcal{B}]_{jk} \pd{}{\mathcal{B}_{jk}}.
  \end{equation*}
  Let us equip $\so(d)$ with the inner product $\ip{\cdot}{\cdot}$ defined as
  \begin{equation*}
    \ip{\cdot}{\cdot}\colon \so(d) \times \so(d) \to \R;
    \qquad
    (\xi, \eta) \mapsto \ip{\xi}{\eta} \defeq \frac{1}{2}\tr(\xi^{T}\eta).
  \end{equation*}
  Note that the dual $\so(d)^{*}$ of the Lie algebra $\so(d)$ may be identified with $\so(d)$ itself via the inner product.
  We may then define $J_{\hbar}(\xi)\colon {\mathcal{M}}_{\hbar} \to \R$ for each $\xi \in \so(d)$ as
  \begin{align*}
    J_{\hbar}(\xi)(w)
    &\defeq \ip{ \Theta_{{\mathcal{M}}_{\hbar}}(w) }{ \xi_{{\mathcal{M}}_{\hbar}}(w) }
    \\
    &= p^{T}\xi q - \frac{\hbar}{4}\tr(\mathcal{B}^{-1}[\xi,\mathcal{A}])
    \\
    &= \ip{ q \diamond p - \frac{\hbar}{2}[\mathcal{B}^{-1},\mathcal{A}]}{\xi},
  \end{align*}
  We also used the following identity: For any $\xi, \eta, \zeta \in \so(d)$,
  \begin{equation*}
    \ip{\xi}{[\eta,\zeta]} = \ip{\eta}{[\zeta,\xi]}.
  \end{equation*}
  Then the corresponding momentum map ${\bf J}_{\hbar}\colon {\mathcal{M}}_{\hbar} \to \so(d)^{*}$ is defined so that, for any $\xi \in \so(d)$,
  \begin{equation}
    \label{eq:J-so(d)}
    J_{\hbar}(\xi)(w) = \ip{ {\bf J}_{\hbar}(w) }{\xi},
  \end{equation}
  which gives the semiclassical angular momentum~\eqref{eq:semiclassical_angular_momentum}.
\end{proof}

\begin{remark}
  In particular, if $d = 3$, we may identify $\so(3)$ with $\R^{3}$ by the ``hat map'' (see, e.g., \citet[p.~289]{MaRa1999})
  \begin{equation}
    \label{eq:hat}
    \hat{(\,\cdot\,)}\colon \R^{3} \to \so(3);
    \qquad
    v = (v_{1}, v_{2}, v_{3}) \mapsto \hat{v} \defeq 
    \begin{bmatrix}
      0 & -v_{3} & v_{2} \\
      v_{3} & 0 & -v_{1} \\
      -v_{2} & v_{1} & 0
    \end{bmatrix}
  \end{equation}
  and write its inverse as $\vee\colon \so(3) \to \R^{3}$; then $\widehat{q \times p} = q \diamond p$ or equivalently $(q \diamond p)^{\vee} = q \times p$.  As a result, we have
  \begin{equation*}
    \ip{ {\bf J}_{\hbar}(w) }{\xi}
    = \parentheses{ q \times p - \frac{\hbar}{2}[\mathcal{B}^{-1},\mathcal{A}]^{\vee} } \cdot \xi^{\vee}
  \end{equation*}
  and thus may define the 3-dimensional semiclassical angular momentum vector $\vec{J}_{\hbar}\colon {\mathcal{M}}_{\hbar} \to \R^{3}$ as follows:
  \begin{equation}
    \label{eq:semiclassical_angular_momentum-3d}
    \vec{J}_{\hbar}(w) \defeq ({\bf J}_{\hbar}(w))^{\vee} = q \times p - \frac{\hbar}{2}[\mathcal{B}^{-1},\mathcal{A}]^{\vee}.
  \end{equation}
  So the semiclassical angular momentum vector $\vec{J}_{\hbar}$ is the classical angular momentum vector $q \times p$ plus an additional quantum term proportional to $\hbar$.
\end{remark}

\subsection{Properties of the Semiclassical Angular Momentum}
The semiclassical angular momentum ${\bf J}_{\hbar}\colon {\mathcal{M}}_{\hbar} \to \so(d)^{*}$ retains the main features of the classical angular momentum due to its geometrically natural construction.
Particularly, we have the following:
\begin{proposition}
  The semiclassical angular momentum ${\bf J}_{\hbar}\colon {\mathcal{M}}_{\hbar} \to \so(d)^{*}$ defined in \eqref{eq:semiclassical_angular_momentum} is an equivariant momentum map under the $\SO(d)$-action, and its components satisfy the equality
  \begin{equation}
    \label{eq:J-Poisson_bracket}
    \left\{ J_{\hbar}^{jk}, J_{\hbar}^{rs} \right\}_{\hbar}
    = \delta_{kr}\, J_{\hbar}^{js} - \delta_{ks}\, J_{\hbar}^{jr}
    + \delta_{js}\, J_{\hbar}^{kr} - \delta_{jr}\, J_{\hbar}^{ks},
  \end{equation}
  with respect to the semiclassical Poisson bracket~\eqref{eq:Poisson_bracket}, where $j, k, r, s \in \{1, \dots, d\}$.
  Particularly, when $d = 3$, the components of the semiclassical angular momentum vector $\vec{J}_{\hbar}\colon {\mathcal{M}}_{\hbar} \to \R^{3}$ defined in \eqref{eq:semiclassical_angular_momentum-3d} satisfy, for any $i,j,k \in \{1, 2, 3\}$ such that $\epsilon_{ijk} = 1$,
  \begin{equation}
    \label{eq:J-Poisson_bracket-3d}
    \left\{ (\vec{J}_{\hbar})_{i}, (\vec{J}_{\hbar})_{j} \right\}_{\hbar}
    = (\vec{J}_{\hbar})_{k}.
  \end{equation}
  That is, the semiclassical angular momentum \eqref{eq:semiclassical_angular_momentum} along with the semiclassical Poisson bracket~\eqref{eq:Poisson_bracket} is a natural extension of the classical angular momentum.
\end{proposition}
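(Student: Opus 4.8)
The plan is to recognize ${\bf J}_{\hbar}$ as an \emph{equivariant} momentum map for the action $\Gamma$ of Proposition~\ref{prop:SO(d)-action} and then read the bracket relations off the universal Lie-algebraic identity obeyed by such maps. For equivariance, recall that under the chosen identification $\so(d)^{*} \cong \so(d)$ via $\ip{\xi}{\eta} = \tfrac12\tr(\xi^{T}\eta)$ the coadjoint action of $R \in \SO(d)$ becomes $\mu \mapsto R\mu R^{T}$, since this pairing is $\Ad$-invariant on the compact algebra $\so(d)$. So equivariance is exactly the identity ${\bf J}_{\hbar}(\Gamma_{R}(w)) = R\,{\bf J}_{\hbar}(w)\,R^{T}$, which I would establish by a two-line computation: writing $q \diamond p = pq^{T} - qp^{T}$ gives $(Rq)\diamond(Rp) = R(q\diamond p)R^{T}$, and since $\mathcal{A}$ and $\mathcal{B}^{-1}$ are symmetric, $\brackets{(R\mathcal{B}R^{T})^{-1},\, R\mathcal{A}R^{T}} = R\,[\mathcal{B}^{-1},\mathcal{A}]\,R^{T}$; adding these with the coefficients in \eqref{eq:semiclassical_angular_momentum} yields the claim. (Alternatively, since ${\bf J}_{\hbar}$ was obtained from the primitive $\Theta_{\mathcal{M}_{\hbar}}$ of ${\Omega}_{\hbar}$ by $J_{\hbar}(\xi) = {\bf i}_{\xi_{\mathcal{M}_{\hbar}}}\Theta_{\mathcal{M}_{\hbar}}$, equivariance is automatic once one checks $\Gamma_{R}^{*}\Theta_{\mathcal{M}_{\hbar}} = \Theta_{\mathcal{M}_{\hbar}}$, the one-form version of Proposition~\ref{prop:SO(d)-action}(ii).)

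Given equivariance, the general fact $\braces{J_{\hbar}(\xi), J_{\hbar}(\eta)}_{\hbar} = J_{\hbar}([\xi,\eta])$ for all $\xi,\eta \in \so(d)$ holds (see, e.g., \citet[Chapter~12]{MaRa1999}); the sign here is the one compatible with the conventions $\braces{F,G}_{\hbar} = {\Omega}_{\hbar}(X_{F},X_{G})$ and ${\bf i}_{X_{H}}{\Omega}_{\hbar} = {\bf d}H$ used throughout this paper. To pass to components, I would use the basis $\braces{E^{jk}}$ of $\so(d)$ with $(E^{jk})_{ab} = \delta_{ja}\delta_{kb} - \delta_{ka}\delta_{jb}$; a short computation with the trace pairing gives $J_{\hbar}(E^{jk})(w) = ({\bf J}_{\hbar}(w))_{jk} = J_{\hbar}^{jk}(w)$, so the $J_{\hbar}^{jk}$ are precisely the $E^{jk}$-components of ${\bf J}_{\hbar}$. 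Combining this with the structure constants $[E^{jk}, E^{rs}] = \delta_{kr}E^{js} - \delta_{ks}E^{jr} + \delta_{js}E^{kr} - \delta_{jr}E^{ks}$ and the identity above reproduces \eqref{eq:J-Poisson_bracket} term by term. (A fully self-contained alternative is to substitute \eqref{eq:semiclassical_angular_momentum} directly into the Poisson bracket \eqref{eq:Poisson_bracket}; this bypasses the equivariance machinery at the cost of a longer index computation and re-proves \eqref{eq:J-Poisson_bracket}.)

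For the $d=3$ case, I would use the hat map \eqref{eq:hat}, under which $[\hat{u},\hat{v}] = \widehat{u\times v}$ and $\ip{\hat{u}}{\hat{v}} = u\cdot v$, so that $(\vec{J}_{\hbar})_{i} = J_{\hbar}(\hat{e}_{i})$ for the standard basis $e_{1},e_{2},e_{3}$ of $\R^{3}$. Then, whenever $\epsilon_{ijk}=1$,
\begin{equation*}
  \braces{(\vec{J}_{\hbar})_{i}, (\vec{J}_{\hbar})_{j}}_{\hbar}
  = J_{\hbar}([\hat{e}_{i},\hat{e}_{j}])
  = J_{\hbar}(\widehat{e_{i}\times e_{j}})
  = J_{\hbar}(\hat{e}_{k})
  = (\vec{J}_{\hbar})_{k},
\end{equation*}
which is \eqref{eq:J-Poisson_bracket-3d}; equivalently this is the $d=3$ specialization of \eqref{eq:J-Poisson_bracket} translated through the hat map.

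The main obstacle is bookkeeping rather than ideas: one must keep all sign and normalization conventions mutually consistent — the $\Ad^{*}$-identification of $\so(d)^{*}$ with $\so(d)$ through the factor-$\tfrac12$ inner product, the sign of $\braces{J(\xi),J(\eta)}_{\hbar} = \pm J([\xi,\eta])$ attached to this paper's bracket and momentum-map conventions, and the index placement in $E^{jk}$ versus the dual basis — so that the four terms on the right of \eqref{eq:J-Poisson_bracket} appear with precisely the displayed signs and with no spurious factor of $2$. Once $J_{\hbar}(E^{jk}) = J_{\hbar}^{jk}$ and the structure-constant formula are pinned down carefully, the rest is formal.
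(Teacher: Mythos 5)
Your proposal is correct and follows essentially the same route as the paper: verify equivariance of ${\bf J}_{\hbar}$ under $\Gamma$ (with $\Ad^{*}_{R^{-1}}\mu = R\mu R^{T}$ under the trace pairing), invoke $\braces{J_{\hbar}(\xi),J_{\hbar}(\eta)}_{\hbar} = J_{\hbar}([\xi,\eta])$, and evaluate on the basis $E_{jk}$ with its structure constants, specializing to $d=3$ via the hat map. The only differences are cosmetic — you spell out the equivariance computation the paper leaves as ``straightforward,'' and your $d=3$ step via $J_{\hbar}(\hat{e}_{i}) = (\vec{J}_{\hbar})_{i}$ and $[\hat{e}_{i},\hat{e}_{j}] = \widehat{e_{i}\times e_{j}}$ avoids the paper's sign bookkeeping with $J_{\hbar}^{jk} = -(\vec{J}_{\hbar})_{i}$, arriving at the same relation~\eqref{eq:J-Poisson_bracket-3d}.
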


\begin{proof}
  Let $\Ad_{R^{-1}}^{*}\colon \so(d)^{*} \to \so(d)^{*}$ be the coadjoint action of $\SO(d)$ on $\so(d)$, i.e.,
  \begin{equation*}
    \Ad^{*}_{{R^{-1}}} \mu = R \mu R^{T}.
  \end{equation*}
  Then it is straightforward to check that the semiclassical angular momentum ${\bf J}_{\hbar}\colon {\mathcal{M}}_{\hbar} \to \so(d)^{*}$ defined in \eqref{eq:semiclassical_angular_momentum} is an equivariant momentum map, i.e., 
  \begin{equation}
    \label{eq:equivariance}
    {\bf J}_{\hbar} \circ \Gamma_{R} = \Ad^{*}_{{R^{-1}}} \circ\, {\bf J}_{\hbar},
  \end{equation}
  or more concretely,
  \begin{equation*}
    {\bf J}_{\hbar}(R q, R p, R\mathcal{A}R^{T}, R\mathcal{B}R^{T}) = R\, {\bf J}_{\hbar}(q, p, \mathcal{A}, \mathcal{B}) R^{T},
  \end{equation*}
  just like the classical angular momentum ${\bf J}_{0}(q, p) \defeq q \diamond p$:
  \begin{equation*}
    {\bf J}_{0}(R q, R p) = R\, {\bf J}_{0}(q, p) R^{T}.
  \end{equation*}
  Furthermore, the equivariance \eqref{eq:equivariance} implies that (see, e.g., \cite[Corollary~4.2.9 on p.~281]{AbMa1978}), for any $\xi, \eta \in \so(d)$, 
  \begin{equation*}
    \left\{ J_{\hbar}(\xi), J_{\hbar}(\eta) \right\}_{\hbar} = J_{\hbar}([\xi,\eta]),
  \end{equation*}
  where $\{\cdot,\cdot\}_{\hbar}$ is the Poisson bracket defined in \eqref{eq:Poisson_bracket}.
  Now let $E_{ij} \defeq e_{i}e_{j}^{T} - e_{j}e_{i}^{T}$, where $e_{i} \in \R^{d}$ with $i \in \{1, \dots, d\}$ is the unit vector whose $i$-th entry is 1.
  Clearly $E_{ij} \in \so(d)$ for any $i, j \in \{1, \dots, d\}$, and for any $A \in \so(d)^{*} \cong \so(d)$, we have $\ip{A}{E_{ij}} = A_{ij}$, and so \eqref{eq:J-so(d)} gives
  \begin{equation*}
    J_{\hbar}(E_{jk}) = J_{\hbar}^{jk}.
  \end{equation*}
  They also satisfy the identity
  \begin{equation*}
    [E_{jk}, E_{rs}] = \delta_{kr} E_{js} - \delta_{ks} E_{jr} + \delta_{js} E_{kr} - \delta_{jr} E_{ks}.
  \end{equation*}
  Therefore we have
  \begin{align*}
    \left\{ J_{\hbar}(E_{jk}), J_{\hbar}(E_{rs}) \right\}_{\hbar}
    &= J_{\hbar}([E_{jk}, E_{rs}])
    \\
    &= \delta_{kr}\, J_{\hbar}(E_{js}) - \delta_{ks}\, J_{\hbar}(E_{jr})
    + \delta_{js}\, J_{\hbar}(E_{kr}) - \delta_{jr}\, J_{\hbar}(E_{ks}),
  \end{align*}
  which gives \eqref{eq:J-Poisson_bracket}.

  In particular, for $d = 3$, let $i,j,k \in \{1, 2, 3\}$ such that $\epsilon_{ijk} = 1$, we have, without assuming summation on $k$,
  \begin{equation*}
    \left\{ J_{\hbar}^{jk}, J_{\hbar}^{ki} \right\}_{\hbar}
    = J_{\hbar}^{ji}
    = -J_{\hbar}^{ij}.
  \end{equation*}
  Notice however $J_{\hbar}^{jk} = -(\vec{J}_{\hbar})_{i}$ etc., and so the components of $\vec{J}_{\hbar}$ satisfy the relationship~\eqref{eq:J-Poisson_bracket-3d} again just like the classical angular momentum $\vec{J}_{0}$ does with the classical Poisson bracket.
\end{proof}

The semiclassical angular momentum is compatible with the quantum picture as well:
Let ${\bf x}_{\text{op}}$ and ${\bf p}_{\text{op}}$ be the position and momentum operators of the canonical quantization, i.e., ${\bf x}_{\text{op}}$ is the multiplication by the position vector $x$, whereas ${\bf p}_{\text{op}} \defeq -i \hbar \nabla$.
Then, it is a tedious but straightforward calculation to show that
\begin{equation*}
  \exval{{\bf x}_{\text{op}} \times {\bf p}_{\text{op}}} =
  \ip{\psi_{0}}{({\bf x}_{\text{op}} \times {\bf p}_{\text{op}}) \psi_{0}} = \vec{J}_{\hbar},
\end{equation*}
that is, the expectation value of the angular momentum operator ${\bf x}_{\text{op}} \times {\bf p}_{\text{op}}$ with respect to the normalized Gaussian wave packet~\eqref{eq:psi_0} coincides with the semiclassical angular momentum~\eqref{eq:semiclassical_angular_momentum-3d}.

\subsection{Example and Numerical Results}
We would like to illustrate the conservation of the semiclassical angular momentum~\eqref{eq:semiclassical_angular_momentum} in the following simple two-dimensional example with rotational symmetry.
\begin{example}[Two-dimensional quartic potential]
  Let $m = 1$ and $\hbar = 0.005$, and consider the two-dimensional case, i.e., $d = 2$, with the following quartic potential with $\SO(2)$-symmetry:
  \begin{equation*}
    V(x_{1}, x_{2}) = V_{R}(\abs{x})
    \quad
    \text{where}
    \quad
    V_{R}(r) = \frac{r^{2}}{2} + \frac{r^{4}}{4}.
  \end{equation*}
  The initial condition is chosen as follows:
  \begin{equation*}
    q(0) = (1,0),
    \qquad
    p(0) = (0, 1),
    \qquad
    \mathcal{A}(0) + {\rm i}\mathcal{B}(0) =
    \begin{bmatrix}
      1 + {\rm i} & 0.5(1+{\rm i}) \\
      0.5(1+{\rm i}) & 1 + {\rm i}
    \end{bmatrix}.
  \end{equation*}
  We used the variational splitting method of \citet[Chapter~IV.4]{Lu2008} (see also \citet{FaLu2006}) for the semiclassical equations~\eqref{eq:Heller-asymptotic-reduced}; it is easy to see that the integrator preserves the symplectic structure ${\Omega}_{\hbar}$.
  The corresponding classical solution is obtained by the St\"ormer--Verlet method\footnote{The variational splitting integrator is a natural extension of the St\"ormer--Verlet method in the sense that it recovers the St\"ormer--Verlet method as $\hbar \to 0$~\cite{FaLu2006}.}, and the time step is $\Delta t = 0.01$ for both solutions.
  \begin{figure}[hbtp]
    \centering
    \subfigure[Classical and semiclassical orbits for $0 \le t \le 50$.]{
      \includegraphics[width=.425\linewidth]{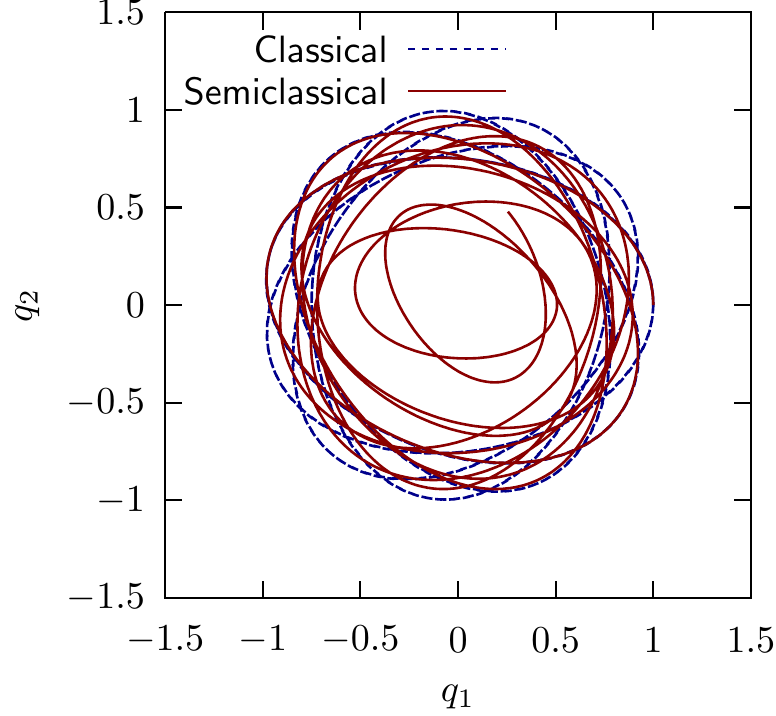}
    }
    \quad
    \subfigure[Time evolution of the classical and semiclassical angular momenta ${\bf J}_{0}$ and ${\bf J}_{\hbar}$, {\em both along the semiclassical dynamics~\eqref{eq:Heller-asymptotic-reduced}}.]{
      \includegraphics[width=.425\linewidth]{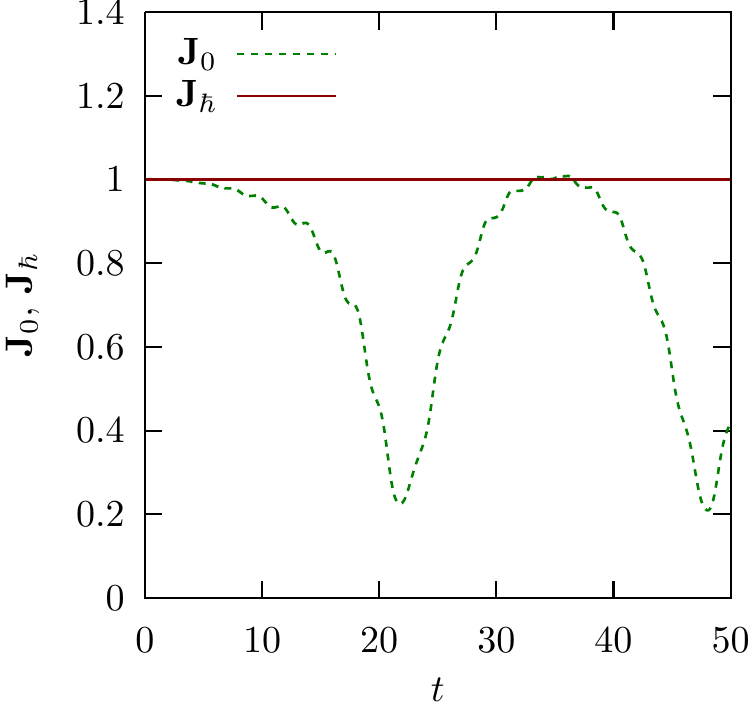}
    }
    \caption{The semiclassical solutions under axisymmetric quartic potential.}
    \label{fig:2D_Quartic}
  \end{figure}

  Figure~\ref{fig:2D_Quartic} shows the classical and semiclassical orbits for $0 \le t \le 50$ as well as the time evolution of the classical angular momentum ${\bf J}_{0} = q_{1} p_{2} - p_{1} q_{2}$ and the semiclassical one ${\bf J}_{\hbar}$ (the third component of \eqref{eq:semiclassical_angular_momentum}) {\em both along the solution of the semiclassical equations~\eqref{eq:Heller-asymptotic-reduced}}.
  One sees that the semiclassical angular momentum ${\bf J}_{\hbar}$ is conserved, whereas the classical one ${\bf J}_{0}$ fluctuates significantly.
\end{example}

\section{Symmetry and Conservation Laws in the Hagedorn Wave Packet Dynamics}
\label{sec:Symmetry_in_Hagedorn}
So far we have been looking into symmetry and conservation laws based on the symplectic formulation of the Gaussian wave packet dynamics from Section~\ref{sec:Symplectic_Semiclassical_Dynamics}.
In this section, we change our focus to the more prevalent formulation by \citet{Ha1980, Ha1998}.
This formulation leads to the elegant derivation by \citet{Ha1998} of raising and lowering operators for the Gaussian wave packet and also an orthonormal basis generated from them.
\citet{FaGrLu2009} exploited this orthonormal basis to develop a numerical method to solve the semiclassical Schr\"odinger equation more efficiently than with the Fourier basis.

\subsection{The Hagedorn Wave Packet Dynamics}
\label{ssec:Hagedorn_dynamics}
\citet{Ha1980} (see also \citet{Heller-LesHouches} and \citet[Chapter~V]{Lu2008}) used a slightly different parametrization of the Gaussian wave packet \eqref{eq:chi}.
More precisely, the elements $\mathcal{C} = \mathcal{A} + {\rm i}\mathcal{B}$ in the Siegel upper half space $\Sigma_{d}$ are parametrized as $\mathcal{C} = P Q^{-1}$ with $Q$ and $P$ being $d \times d$ complex matrices that satisfy
\begin{equation*}
  Q^{T}P - P^{T}Q = 0
  \quad\text{and}\quad
  Q^{*}P - P^{*}Q = 2{\rm i}I_d.
\end{equation*}
So the Gaussian wave packet~\eqref{eq:chi} can now be written as
\begin{equation*}
  \chi(y;x) = \exp\braces{ \frac{{\rm i}}{\hbar}\brackets{ \frac{1}{2}(x - q)^{T}P Q^{-1}(x - q) + p \cdot (x - q) + (\phi + {\rm i}\delta) } },
\end{equation*}
where $y \defeq (q, p, Q, P, \phi, \delta)$.
Its norm is then
\begin{equation*}
  \mathcal{N}(Q,\delta) \defeq \norm{\chi(y;\cdot)}^{2} =  (\pi\hbar)^{d/2}\, |\det Q| \exp\parentheses{ -\frac{2\delta}{\hbar} }.
\end{equation*}
Hence the wave packet is normalized as follows:
\begin{align}
  \label{eq:chi-Hagedorn-normalized}
  \frac{\chi(y;x)}{\norm{\chi(y;\cdot)}}
  &= (\pi\hbar)^{-d/4} |\det Q|^{-1/2} \exp\braces{ \frac{{\rm i}}{\hbar}\brackets{ \frac{1}{2}(x - q)^{T}P Q^{-1}(x - q) + p \cdot (x - q) + \phi } }
  \nonumber\\
  &= e^{{\rm i}S/\hbar}\, \varphi_{0}(q, p, Q, P; x),
\end{align}
where we defined the new variable
\begin{equation*}
  S \defeq \phi - \frac{\hbar}{2}\arg(\det Q)
\end{equation*}
and the ``ground state'' $\varphi_{0}$ of the Hagedorn wave packets
\begin{equation*}
  \varphi_{0}(q, p, Q, P; x) \defeq (\pi\hbar)^{-d/4} (\det Q)^{-1/2} \exp\braces{ \frac{{\rm i}}{\hbar}\brackets{ \frac{1}{2}(x - q)^{T}P Q^{-1}(x - q) + p \cdot (x - q) } },
\end{equation*}
where an appropriate branch cut is taken for $(\det Q)^{1/2}$.

\citet{Ha1980, Ha1998} showed that \eqref{eq:chi-Hagedorn-normalized} is an exact solution of the Schr\"odinger equation if the potential $V(x)$ is quadratic and also the parameters $(q, p, Q, P)$ satisfy
\begin{equation}
  \label{eq:Hagedorn}
  \dot{q} = \frac{p}{m},
  \qquad
  \dot{p} = -\nabla V(q),
  \qquad
  \dot{Q} = \frac{P}{m},
  \qquad
  \dot{P} =  -\nabla^{2}V(q)\,Q,
\end{equation}
and the quantity $S(t)$ is the classical action integral evaluated along the solution $(q(t),p(t))$, i.e.,
\begin{equation*}
  S(t) = S(0) + \int_{0}^{t} \parentheses{ \frac{p(s)^{2}}{2m} - V(q(s)) }\,ds.
\end{equation*}
\citet{Ha1980, Ha1998} also showed that \eqref{eq:chi-Hagedorn-normalized} gives an $O(t\sqrt{\hbar})$ approximation even when the potential $V(x)$ is not quadratic as long as it satisfies some regularity assumptions.

\subsection{The Hagedorn Parametrization and Symplectic Group $\Sp(2d,\R)$}
As pointed out by \citet[Section~V.1]{Lu2008}, the matrices $Q$ and $P$ used above to parametrize the Siegel upper half space $\Sigma_{d}$ constitute a symplectic matrix of degree $d$.
Specifically, \citet[Section~V.1]{Lu2008} shows that the symplectic group $\Sp(2d,\R)$ is written as follows:
\begin{align*}
  \Sp(2d,\R)
  &= \setdef{
    \begin{bmatrix}
      \Re Q & \Im Q \smallskip\\
      \Re P & \Im P
    \end{bmatrix}
  }{Q, P \in {\sf M}_{d}(\C),\, Q^{T}P - P^{T}Q = 0,\, Q^{*}P - P^{*}Q = 2{\rm i}I_d}
  \\
  &= \setdef{
    (Q, P) \in {\sf M}_{d}(\C) \times {\sf M}_{d}(\C)
  }{Q^{T}P - P^{T}Q = 0,\, Q^{*}P - P^{*}Q = 2{\rm i}I_d},
\end{align*}
where ${\sf M}_{d}(\C)$ stands for the set of all $d \times d$ complex matrices; it is also shown that Hagedorn's parametrization of $\Sigma_{d}$ is nothing but the explicit description of the map $\pi_{\U(d)}\colon \Sp(2d,\R) \to \Sigma_{d}$ given by
\begin{equation}
  \label{eq:pi_Ud-2}
  \pi_{\U(d)}\colon
  \begin{bmatrix}
    \Re Q & \Im Q \smallskip\\
    \Re P & \Im P
  \end{bmatrix} 
  \text{ or }
  (Q, P)
  \mapsto
  P Q^{-1}.
\end{equation}
As shown in Appendix~\ref{sec:Sigma_d}, this is the natural quotient map that comes from the fact that the Siegel upper half space $\Sigma_{d}$ is identified as the homogeneous space $\Sp(2d,\R)/\U(d)$.

By setting 
\begin{equation*}
  Y(t) =
  \begin{bmatrix}
    \Re Q(t) & \Im Q(t) \smallskip\\
    \Re P(t) & \Im P(t)
  \end{bmatrix},
\end{equation*}
we can rewrite the last two equations in \eqref{eq:Hagedorn} for $Q$ and $P$ as 
\begin{equation}
  \label{eq:Hagedorn-Sp}
  \dot{Y}(t) = \xi(t)\,Y(t),
\end{equation}
where $\xi(t)$ is defined as
\begin{equation}
  \label{eq:Hagedorn-xi}
  \xi(t) \defeq 
  \begin{bmatrix}
    0 & I_{d}/m \smallskip\\
    -\nabla^{2}V(q(t)) & 0
  \end{bmatrix},
\end{equation}
and is an element in the Lie algebra $\mathfrak{sp}(2d,\R)$ of $\Sp(2d,\R)$.

Now, defining a curve $Y(t)$ in $\Sp(2d,\R)$ by \eqref{eq:Hagedorn-Sp}, the equations~\eqref{eq:Hagedorn} of Hagedorn may be rewritten as
\begin{equation}
  \label{eq:Hagedorn2}
  \dot{q} = \frac{p}{m},
  \qquad
  \dot{p} = -\nabla V(q),
  \qquad
  \dot{Y} = \xi Y,
\end{equation}
and hence defines a time evolution in $T^{*}\R^{d} \times \Sp(2d,\R)$.
Note that the dimension of $T^{*}\R^{d} \times \Sp(2d,\R)$ is $2d^{2} + 3d = d(2d+3)$, which is odd if $d$ is odd, and so $T^{*}\R^{d} \times \Sp(2d,\R)$ cannot be a symplectic manifold when $d$ is odd.

\begin{remark}
  This result suggests that the Hagedorn wave packet dynamics is a lift of the $\Sigma_{d}$-component of the semiclassical dynamics~\eqref{eq:Heller-reduced} or \eqref{eq:Heller-asymptotic-reduced} to the symplectic group $\Sp(2d,\R)$; see Proposition~\ref{prop:Hagedorn-Heller} in Appendix~\ref{sec:Sigma_d} for this connection between the two formulations.
\end{remark}

\subsection{Symmetry and Conservation Laws in the Hagedorn Wave Packet Dynamics}
Suppose that the classical Hamiltonian system
\begin{equation}
  \label{eq:classicalHamSys}
  \dot{q} = \pd{H}{p},
  \qquad
  \dot{p} = -\pd{H}{q}
\end{equation}
has a Lie group symmetry and therefore, by Noether's theorem, possesses a conserved quantity.
Do the semiclassical equations~\eqref{eq:Hagedorn} of Hagedorn inherit the symmetry and the conservation law?
Note that Noether's theorem does not directly extend to \eqref{eq:Hagedorn} because, as mentioned above, \eqref{eq:Hagedorn} is not defined on a symplectic manifold.

In what follows, we will give an affirmative answer to the above question by proving the following:
\begin{theorem}
  \label{thm:Noether_for_Hagedorn}
  Suppose that the classical Hamiltonian system~\eqref{eq:classicalHamSys} on $T^{*}\R^{d}$ has a symmetry under the action of a Lie group $G$, that is, let $\phi\colon G \times \R^{d} \to \R^{d}$ be the action of $G$ on the configuration space $\R^{d}$, and suppose that the Hamiltonian $H\colon T^{*}\R^{d} \to \R$ has a $G$-symmetry under the action of the cotangent lift $\Phi \defeq T^{*}\phi\colon G \times T^{*}\R^{d} \to T^{*}\R^{d}$, i.e., $H \circ \Phi_{g} = H$.
  Let ${\bf J}\colon T^{*}\R^{d} \to \mathfrak{g}^{*}$ be the corresponding momentum map.
  Then the quantity
  \begin{equation}
    \label{eq:ConservedQuantity-Hagedorn}
    (D{\bf J}(z) \cdot Y)_{k} = \pd{{\bf J}}{z^{j}}(z)\, Y_{jk}
  \end{equation}
  for $k \in \{1, \dots, d\}$ is conserved along the solutions $(z(t), Y(t)) \in T^{*}\R^{d} \times \Sp(2d,\R)$ of the equation~\eqref{eq:Hagedorn2} of Hagedorn.
\end{theorem}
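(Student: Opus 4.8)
The plan is to recognize that the matrix $\xi(t)$ in \eqref{eq:Hagedorn-xi} is nothing but the Jacobian $DX_{H}$ of the classical Hamiltonian vector field $X_{H}$ of \eqref{eq:classicalHamSys}, evaluated along the classical solution $z(t)=(q(t),p(t))$; indeed, for the kinetic-plus-potential Hamiltonian $H(q,p)=p^{2}/2m+V(q)$ underlying \eqref{eq:Hagedorn2} one has $X_{H}(q,p)=(p/m,-\nabla V(q))$, whose derivative is exactly $\xi$. Consequently the $Y$-equation $\dot{Y}=\xi Y$ in \eqref{eq:Hagedorn2} is precisely the first variation (linearized) equation of \eqref{eq:classicalHamSys} along $z(t)$, so that each column $Y_{\cdot k}(t)$ of $Y(t)$ satisfies $\dot{\eta}=DX_{H}(z(t))\,\eta$. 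Granting this, Theorem~\ref{thm:Noether_for_Hagedorn} reduces to the following general fact, which I would state and prove first: if $F$ is any function on $T^{*}\R^{d}$ valued in a vector space and conserved along the flow of a vector field $X$, then for every integral curve $z(t)$ of $X$ and every solution $\eta(t)$ of the first variation equation $\dot{\eta}=DX(z(t))\,\eta$, the pairing $DF(z(t))\cdot\eta(t)$ is constant in $t$.

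To prove this lemma I would argue as follows. Since $F$ is conserved, $DF(z)\cdot X(z)=0$ at every $z$ lying on a trajectory of $X$; as $X$ is defined on all of $T^{*}\R^{d}$ and every point lies on its own short-time trajectory, the function $z\mapsto DF(z)\cdot X(z)$ vanishes identically, hence so does its derivative $D\bigl(DF\cdot X\bigr)$. Now differentiate $DF(z(t))\cdot\eta(t)$ along the coupled flow, using the chain rule $\frac{d}{dt}DF(z(t))=D^{2}F(z(t))\cdot X(z(t))$ together with $\dot{\eta}(t)=DX(z(t))\,\eta(t)$:
\begin{equation*}
  \frac{d}{dt}\bigl(DF(z(t))\cdot\eta(t)\bigr)
  =\bigl(D^{2}F(z(t))\cdot X(z(t))\bigr)\cdot\eta(t)+DF(z(t))\cdot\bigl(DX(z(t))\,\eta(t)\bigr).
\end{equation*}
By the Leibniz identity $D\bigl(DF\cdot X\bigr)=D^{2}F\cdot X+DF\cdot DX$ (in coordinates, $\partial_{i}(\partial_{j}F\,X^{j})=\partial_{i}\partial_{j}F\,X^{j}+\partial_{j}F\,\partial_{i}X^{j}$) the right-hand side equals $D\bigl(DF\cdot X\bigr)(z(t))\cdot\eta(t)$, which is zero by the previous sentence. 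Hence $DF(z(t))\cdot\eta(t)$ is conserved.

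Finally I would apply the lemma with $X=X_{H}$ the classical Hamiltonian vector field of \eqref{eq:classicalHamSys} and $F={\bf J}\colon T^{*}\R^{d}\to\mathfrak{g}^{*}$ the momentum map: by Noether's theorem the $G$-symmetry of $H$ makes ${\bf J}$ conserved along the classical flow, so the hypotheses of the lemma hold (componentwise in $\mathfrak{g}^{*}$, i.e.\ for each $J_{\xi}=\tip{{\bf J}}{\xi}$). Taking $\eta(t)=Y_{\cdot k}(t)$, the $k$-th column of $Y(t)$, shows that $\bigl(D{\bf J}(z(t))\cdot Y(t)\bigr)_{k}=\frac{\partial{\bf J}}{\partial z^{j}}(z(t))\,Y_{jk}(t)$ is constant along the solutions of \eqref{eq:Hagedorn2}, which is exactly \eqref{eq:ConservedQuantity-Hagedorn}. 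There is no serious obstacle here; the only points needing a little care are the identification of $\xi$ with $DX_{H}$ (so that the $Y$-equation genuinely is the first variation equation) and the observation that $DF\cdot X$ must vanish \emph{identically} on $T^{*}\R^{d}$, not merely along a single trajectory, for its derivative to vanish — both immediate in this setting. The geometric content, which Appendix~\ref{sec:GeomOfFirstVarEq} makes precise, is that $\d{\bf J}$ is an invariant of the tangent-lifted (linearized) flow.
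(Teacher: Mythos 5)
Your proposal is correct, but it takes a genuinely different route from the paper. The paper proves the theorem by invoking the symplectic geometry of the first variation equation developed in Appendix~\ref{sec:GeomOfFirstVarEq}: it sets $\delta z(t)=Y(t)\,\delta z_{0}$ for an arbitrary $\delta z_{0}$, notes that $(z(t),\delta z(t))$ solves the first variation equation, and then applies Proposition~\ref{prop:hatJ} (via Lemmas~\ref{lemma:Tf} and \ref{lemma:H-hatH_symmetry}), which says that the tangent-lifted action $T\Phi$ is symplectic on $T(T^{*}\R^{d})$ with respect to $\Omega_{T\mathcal{P}}$, that $\tilde{H}$ is invariant, and that $\tilde{\bf J}(z,\delta z)=\d{\bf J}(z)\cdot\delta z$ is the momentum map of the lifted action, hence conserved by Noether's theorem on $T(T^{*}\R^{d})$; arbitrariness of $\delta z_{0}$ then yields conservation of every column of $D{\bf J}(z)\cdot Y$. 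You instead bypass the symplectic structure on $T(T^{*}\R^{d})$ entirely: after identifying $\xi(t)=DX_{H}(z(t))=\mathbb{J}\nabla^{2}H(z(t))$, so that each column of $Y$ solves the linearized equation, you prove the elementary and more general lemma that the differential of \emph{any} first integral of \emph{any} vector field is a first integral of its variational equation, by differentiating the identically vanishing function $DF\cdot X$ and using the chain/Leibniz rule and symmetry of second derivatives; the Hamiltonian structure enters only through classical Noether's theorem, which gives conservation of ${\bf J}$ along \eqref{eq:classicalHamSys}. Your calculation is sound (the points you flag — that $DF\cdot X$ vanishes identically and not just on one trajectory, and that the columns of $Y$ solve the variational equation even though $Y(0)$ need not be the identity — are exactly the right ones, and ${\bf J}$ is $C^{2}$ here so the Hessian symmetry is available). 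What your argument buys is generality and economy: it shows the conclusion has nothing to do with symplecticity of the lifted flow, only with linearization of a conserved quantity. What the paper's argument buys is the identification of $D{\bf J}(z)\cdot\delta z$ as a genuine momentum map of the symplectic $G$-action $T\Phi$, which is what supports the ``Noether-type theorem'' interpretation and feeds directly into the subsequent discussion of the induced $\SO(d)$-action $\hat{\gamma}$, the action $\Upsilon$ on $T^{*}\R^{d}\times\Sp(2d,\R)$, and the equivariance of $\mathcal{J}$ in \eqref{eq:semiclassical_angular_momentum-Hagedorn}.
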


Before proving this theorem, let us work out an interesting special case:
\begin{example}[$\SO(3)$-symmetry]
  \label{ex:Noether_for_Hagedorn-SO3}
  Let $d = 3$ and $G = \SO(3)$, i.e., the classical Hamiltonian system~\eqref{eq:classicalHamSys} in $T^{*}\R^{3}$ has a rotational symmetry.
  The corresponding momentum map ${\bf J}\colon T^{*}\R^{3} \to \so(3)^{*} \cong \R^{3}$ is given by ${\bf J} = q \times p$.
  Then we have the $3 \times 6$ matrix
  \begin{equation*}
    D{\bf J}(z) = 
    \brackets{
      -\hat{p}
      \;|\;
      \hat{q}
    },
  \end{equation*}
  where we used the hat map $\hat{(\,\cdot\,)}\colon \R^{3} \to \so(3)$ defined in \eqref{eq:hat}.
  Therefore the conserved quantity~\eqref{eq:ConservedQuantity-Hagedorn} is given by
  \begin{equation*}
    D{\bf J}(z) \cdot Y
    = \brackets{
      \hat{q}\,\Re P - \hat{p}\,\Re Q
      \;|\;
      \hat{q}\,\Im P - \hat{p}\,\Im Q
      },
  \end{equation*}
  that is, the complex $3 \times 3$ matrix-valued quantity $\mathcal{J}\colon T^{*}\R^{3} \times \Sp(6,\R) \to {\sf M}_{3}(\C)$ defined by
  \begin{equation}
    \label{eq:semiclassical_angular_momentum-Hagedorn}
   \mathcal{J}(z,Y) \defeq \hat{q}\,P - \hat{p}\,Q
  \end{equation}
  is conserved along the solutions of the semiclassical equations~\eqref{eq:Hagedorn} (or \eqref{eq:Hagedorn2}) of Hagedorn.
\end{example}

\subsection{First Variation Equation and Evolution in Symplectic Group $\Sp(2d,\R)$}
The key to the proof of Theorem~\ref{thm:Noether_for_Hagedorn} is the well-known connection (see, e.g., \citet[Section~2]{Li1986}) between the equation~\eqref{eq:Hagedorn} of Hagedorn and the so-called first variation equation of the classical Hamiltonian system~\eqref{eq:classicalHamSys} on $T^{*}\R^{d}$.
Here we give a brief overview of this result; see also Appendix~\ref{sec:GeomOfFirstVarEq} for those theoretical results that will be used in the proof.

Consider the following linearization {\em along a solution} $z(t) \defeq (q(t), p(t))$ of the classical Hamiltonian system~\eqref{eq:classicalHamSys}:
\begin{equation*}
  \od{}{t}
  \begin{bmatrix}
    \delta q(t) \medskip\\
    \delta p(t)
  \end{bmatrix}
   = 
   \begin{bmatrix}
     D_{1}D_{2}H(z(t)) & D_{2}D_{2}H(z(t)) \medskip\\
     -D_{1}D_{1}H(z(t)) & -D_{2}D_{1}H(z(t))
   \end{bmatrix}
   \begin{bmatrix}
     \delta q(t) \medskip\\
     \delta p(t)
   \end{bmatrix},
\end{equation*}
where $D_{1}$ and $D_{2}$ stand for the derivatives with respect to $q$ and $p$, respectively, of functions of $(q,p)$.
We may rewrite it in a more succinct form
\begin{equation}
  \label{eq:LinearizedSystem}
  \od{}{t}\delta z(t) = \mathbb{J}\,\nabla^{2}H(z(t))\,\delta z(t) = \xi(t)\,\delta z(t),
\end{equation}
where we set
\begin{equation*}
  \delta z(t) \defeq
  \begin{bmatrix}
    \delta q(t) \medskip\\
    \delta p(t)
  \end{bmatrix}
  \in T_{z(t)}(T^{*}\R^{d}) \cong \R^{2d}
\end{equation*}
and $\mathbb{J} \defeq \begin{tbmatrix}
  0 & I_{d} \\
  -I_{d} & 0
\end{tbmatrix}$, and $\xi(t) \in \mathfrak{sp}(2d,\R)$ is defined as
\begin{equation*}
  \xi(t) \defeq \mathbb{J}\,\nabla^{2}H(z(t)) =
  \begin{bmatrix}
    D_{1}D_{2}H(z(t)) & D_{2}D_{2}H(z(t)) \medskip\\
    -D_{1}D_{1}H(z(t)) & -D_{2}D_{1}H(z(t))
  \end{bmatrix}.
\end{equation*}
In particular, with the Hamiltonian of the form
\begin{equation}
  \label{eq:simple_Hamiltonian}
  H = \frac{p^{2}}{2m} + V(q),
\end{equation}
we have the $\xi(t)$ in \eqref{eq:Hagedorn-xi}.

The system consisting of \eqref{eq:classicalHamSys} and \eqref{eq:LinearizedSystem}, i.e.,
\begin{equation}
  \label{eq:FirstVariationEq}
  \dot{z} = \mathbb{J}\nabla H,
  \qquad
  \dot{\delta z} = \xi\,\delta z,
\end{equation}
is called the {\em first variation equation} of \eqref{eq:classicalHamSys}; as shown in Appendix~\ref{sec:GeomOfFirstVarEq} ($\mathcal{P}$ in Appendix~\ref{sec:GeomOfFirstVarEq} is $T^{*}\R^{d}$ here), it is also a Hamiltonian system
\begin{equation*}
  \ins{X_{\tilde{H}}} \Omega_{T(T^{*}\R^{d})} = \d{\tilde{H}}
\end{equation*}
on the tangent bundle
\begin{equation*}
  T(T^{*}\R^{d}) \cong T^{*}\R^{d} \times \R^{2d} \cong \R^{4d} = \{(z, \delta z)\} = \{(q,p,\delta q,\delta p)\}  
\end{equation*}
with the symplectic structure
\begin{equation}
  \label{eq:Omega_TTstarR^d}
  \Omega_{T(T^{*}\R^{d})} =  \d{\delta q} \wedge \d{p} + \d{q} \wedge \d{\delta p}
\end{equation}
and the Hamiltonian $\tilde{H}\colon T(T^{*}\R^{d}) \to \R$ defined by
\begin{equation}
  \label{eq:tildeH}
  \tilde{H}(z,\delta z) \defeq \d{H}(z) \cdot \delta z = \pd{H}{q} \cdot \delta q + \pd{H}{p} \cdot \delta p.
\end{equation}

The Hagedorn equations~\eqref{eq:Hagedorn2} on $T^{*}\R^{d} \times \Sp(2d,\R)$ generalize the first variation equation \eqref{eq:FirstVariationEq} in the sense that \eqref{eq:Hagedorn2} effectively keeps track of solutions of the first variation equation for {\em all} initial conditions $\delta z(0) \in T_{z(0)}(T^{*}\R) \cong \R^{2d}$ at the same time, as opposed to following a single trajectory $\delta z(t)$ for a single particular initial condition $\delta z(0)$:
In fact, for {\em any} $\delta z_{0} \in \R^{2d}$,
\begin{equation}
  \label{eq:deltaz-Y}
  \delta z(t) = Y(t)\,Y(0)^{-1}\,\delta z(0)
\end{equation}
satisfies the linearized equation \eqref{eq:LinearizedSystem}.

Now that the first variation equation is linked with the semiclassical equation of Hagedorn, we exploit the Hamiltonian structure of the first variation equation to formulate a Noether-type theorem for the equations~\eqref{eq:Hagedorn2} of Hagedorn in the presence of symmetry.
The basic idea is to construct conserved quantities from the momentum map of the first variation equation (see Section~\ref{ssec:Symmetry_in_First_Variation_Eq} in Appendix~\ref{sec:GeomOfFirstVarEq}):

\begin{proof}[Proof of Theorem~\ref{thm:Noether_for_Hagedorn}]
  Let $(z(t), Y(t)) = (q(t),p(t),Y(t)) \in T^{*}\R^{d} \times \Sp(2d,\R)$ be a solution of \eqref{eq:Hagedorn2}.
  Setting $\delta z(t) = Y(t)\,\delta z_{0}$ with an {\em arbitrary} $\delta z_{0} \in T_{z(0)}(T^{*}\R) \cong \R^{2d}$, the curve $(z(t),\delta z(t))$ in $T(T^{*}\R^{d})$ is a solution of the first variation equation~\eqref{eq:FirstVariationEq} with the initial condition $(z(0), Y(0)\,\delta z_{0})$. (Note that $Y(0)$ is not necessarily the identity; see \eqref{eq:deltaz-Y}.)
  Now, the assumption on $G$-symmetry implies that this is a special case of the setting discussed in Appendix~\ref{sec:GeomOfFirstVarEq} with $\mathcal{P} = T^{*}\R^{d}$.
  Hence by Proposition~\ref{prop:hatJ}, the momentum map $\tilde{\bf J}\colon T(T^{*}\R^{d}) \to \mathfrak{g}^{*}$ defined by\footnote{We slightly abused the notation here and Appendix~\ref{sec:GeomOfFirstVarEq} and denote by $\delta z$ an element in $T\mathcal{P}$ as well as a tangent vector in $T_{z}P$; what we write $(z,\delta z)$ here is denoted by $\delta z$ in Appendix~\ref{sec:GeomOfFirstVarEq}.}
  \begin{equation*}
    \tilde{{\bf J}}(z,\delta z) \defeq \d{\bf J}(z) \cdot \delta z
  \end{equation*}
  is conserved along the solutions of the first variation equation~\eqref{eq:FirstVariationEq}.
  Therefore,
  \begin{equation*}
    \tilde{{\bf J}}(z(t),Y(t)\,\delta z_{0}) = \d{\bf J}(z(t)) \cdot Y(t)\,\delta z_{0}
  \end{equation*}
  is conserved along the solution $(z(t), Y(t))$.
  However, since $\delta z_{0}$ is chosen arbitrarily, the quantity
  \begin{equation*}
    (D{\bf J}(z) \cdot Y)_{k} = \pd{{\bf J}}{z^{j}}(z(t))\, Y_{jk}(t)
  \end{equation*}
  for $k \in \{1, \dots, d\}$ is conserved along the solution $(z(t), Y(t))$ of the equation~\eqref{eq:Hagedorn2} of Hagedorn.
\end{proof}

\subsection{Rotational Symmetry in the Hagedorn Wave Packet Dynamics}
The above proof, however, does not reveal the group action on the manifold $T^{*}\R^{d} \times \Sp(2d,\R)$.
In this section, we focus on the case with rotational symmetry (see Example~\ref{ex:Noether_for_Hagedorn-SO3}), and find the corresponding $\SO(d)$-action on $T^{*}\R^{d} \times \Sp(2d,\R)$.

Suppose that the potential $V$ is invariant under the $\SO(d)$-action~\eqref{eq:Phi-SO(d)}; then the classical Hamiltonian~\eqref{eq:simple_Hamiltonian} is clearly invariant under the $\SO(d)$-action.
Now the associated tangent $\SO(d)$-action
\begin{equation*}
  T\Phi\colon \SO(d) \times T(T^{*}\R^{d}) \to T(T^{*}\R^{d})
\end{equation*}
is given by
\begin{equation}
  \label{eq:TPhi-SO(d)}
  T\Phi_{R} \colon T(T^{*}\R^{d}) \to T(T^{*}\R^{d});
  \qquad
  (q, p, \delta q, \delta p) \mapsto (R q, R p, R\,\delta q, R\,\delta p).
\end{equation}
It is clearly a symplectic action on $T(T^{*}\R^{d})$ with respect to the symplectic form \eqref{eq:Omega_TTstarR^d}, thus illustrating Lemma~\ref{lemma:Tf}.
Also, by Lemma~\ref{lemma:H-hatH_symmetry}, the Hamiltonian $\tilde{H}$ in \eqref{eq:tildeH} for the first variation equation is $\SO(d)$-invariant as well, i.e., $\tilde{H} \circ T\Phi_{R} = \tilde{H}$.

What is the corresponding $\SO(d)$-action on $\Sp(2d,\R)$?
First recall that we obtained Hagedorn's equations~\eqref{eq:Hagedorn} by defining $Y(t) \in \Sp(2d,\R)$ as
\begin{equation*}
  Y(t) \defeq
  \begin{bmatrix}
    \Re Q(t) & \Im Q(t) \smallskip\\
    \Re P(t) & \Im P(t)
  \end{bmatrix}
\end{equation*}
and setting $\delta z(t) = Y(t)\,\delta z_{0}$ as in \eqref{eq:deltaz-Y} with an {\em arbitrary} $\delta z_{0} \in T_{z(0)}(T^{*}\R) \cong \R^{2d}$.
Since $\delta z$ is transformed under the $\SO(d)$-action~\eqref{eq:TPhi-SO(d)} as
\begin{equation*}
  \delta z =
  \begin{bmatrix}
    \delta q \\
    \delta p
  \end{bmatrix}
  \mapsto
  \begin{bmatrix}
    R\,\delta q \\
    R\,\delta p
  \end{bmatrix}
  = \tilde{R}\,\delta z
  \quad\text{with}\quad
  \tilde{R} \defeq
  \begin{bmatrix}
    R & 0 \\
    0 & R
  \end{bmatrix}
  \in \Sp(2d,\R),
\end{equation*}
we have the actions $\delta z(t) \mapsto \tilde{R}\,\delta z(t)$ and $\delta z_{0} \mapsto \tilde{R}\,\delta z_{0}$, hence the corresponding $\SO(d)$-action $\hat{\gamma}\colon \SO(d) \times \Sp(2d,\R) \to \Sp(2d,\R)$ should satisfy
\begin{equation*}
  \tilde{R}\,\delta z(t) = \hat{\gamma}_{R}(Y(t))\,\tilde{R}\,\delta z_{0},
\end{equation*}
which leads us to the conjugation $\hat{\gamma}_{R}(Y) = \tilde{R} Y \tilde{R}^{T}$, i.e.,
\begin{equation*}
  Y = \begin{bmatrix}
    \Re Q & \Im Q \smallskip\\
    \Re P & \Im P
  \end{bmatrix}
  \mapsto
  \begin{bmatrix}
    R & 0 \\
    0 & R
  \end{bmatrix}
  \begin{bmatrix}
    \Re Q & \Im Q \smallskip\\
    \Re P & \Im P
  \end{bmatrix}
  \begin{bmatrix}
    R^{T} & 0 \\
    0 & R^{T}
  \end{bmatrix}
  =
  \begin{bmatrix}
    R(\Re Q)R^{T} & R(\Im Q)R^{T} \smallskip\\
    R(\Re P)R^{T} & R(\Im P)R^{T}
  \end{bmatrix}.
\end{equation*}

\begin{remark}
  The above $\SO(d)$-action $\hat{\gamma}\colon \SO(d) \times \Sp(2d,\R) \to \Sp(2d,\R)$ defined by
  \begin{equation*}
    \hat{\gamma}_{R}\colon \Sp(2d,\R) \to \Sp(2d,\R);
    \quad
    \begin{bmatrix}
      A & B \\
      C & D
    \end{bmatrix}
    \mapsto
    \begin{bmatrix}
      R A R^{T} & R B R^{T} \\
      R C R^{T} & R D R^{T}
     \end{bmatrix}.
  \end{equation*}
  is compatible with the action on $\Sigma_{d}$ defined in \eqref{eq:SO(d)-action_Sigma_d}, i.e., the diagram
  \begin{equation*}
    \begin{tikzcd}[column sep=8ex, row sep=7ex]
      \Sp(2d,\R) \arrow{r}{} \arrow{d}[swap]{\pi_{\U(d)}} \arrow{r}{\hat{\gamma}_{R}} & \Sp(2d,\R) \arrow{d}{\pi_{\U(d)}}
      \\
      \Sigma_{d} \arrow{r}[swap]{\gamma_{R}} & \Sigma_{d}
    \end{tikzcd}
  \end{equation*}
  commutes for any $R \in \SO(d)$, where $\pi_{\U(d)}\colon \Sp(2d,\R) \to \Sigma_{d}$ is the quotient map defined in \eqref{eq:pi_Ud} of Appendix~\ref{sec:Sigma_d}.
\end{remark}
The cotangent lift~\eqref{eq:Phi-SO(d)} combined with the above action induces an $\SO(d)$-action on $T^{*}\R^{d} \times \Sp(2d,\R)$:
For any $(R, q_{0}) \in \SO(d)$, we define
\begin{equation*}
  \Upsilon_{R}\colon T^{*}\R^{d} \times \Sp(2d,\R) \to T^{*}\R^{d} \times \Sp(2d,\R);
  \quad
  (z, Y) \mapsto (\Phi_{R}(z), \hat{\gamma}_{R}(Y)).
\end{equation*}
When $d = 3$, it is easy to see that the conserved quantity $\mathcal{J}$ in \eqref{eq:semiclassical_angular_momentum-Hagedorn} is equivariant under the natural $\SO(3)$-actions, i.e.,
\begin{equation*}
  \mathcal{J} \circ \Upsilon_{R}(z,Y) = R\,\mathcal{J}(z,Y) R^{T}
\end{equation*}
for any $R \in \SO(3)$.

\section*{Acknowledgments}
This work was inspired by the discussions with Luis Garc\'ia-Naranjo and Joris Vankerschaver at the 2013 SIAM Annual Meeting in San Diego, and was partially supported by the AMS--Simons Travel Grant.

\appendix

\section{The Siegel Upper Half Space $\Sigma_{d}$}
\label{sec:Sigma_d}
\subsection{Geometry of the Siegel Upper Half Space $\Sigma_{d}$}
Recall that the $d \times d$ complex matrix $\mathcal{C} = \mathcal{A} + {\rm i}\mathcal{B}$ in the Gaussian wave packet~\eqref{eq:chi} belongs to the so-called Siegel upper half space $\Sigma_{d}$ defined as (see Eq.~\eqref{eq:Sigma_d})
\begin{equation*}
  \Sigma_{d} \defeq 
  \setdef{ \mathcal{A} + {\rm i}\mathcal{B} \in \mathbb{C}^{d\times d} }{ \mathcal{A}, \mathcal{B} \in \text{Sym}_{d}(\R),\, \mathcal{B} > 0 }.
\end{equation*}
The key to understanding the geometry of the Hagedorn wave packet dynamics in Section~\ref{ssec:Hagedorn_dynamics} is the fact that the Siegel upper half space $\Sigma_{d}$ is a homogeneous space.
Specifically, we can show that (see \citet{Si1943} and also \citet[Section~4.5]{Fo1989} and \citet[Exercise~2.28 on p.~48]{McSa1999})
\begin{equation*}
  \Sigma_{d} \cong \Sp(2d,\R)/\U(d),
\end{equation*}
where $\Sp(2d,\R)$ is the symplectic group of degree $2d$ over real numbers and $\U(d)$ is the unitary group of degree $d$.
In fact, consider the (left) action of $\Sp(2d,\R)$ on $\Sigma_{d}$ defined by
\begin{equation}
  \label{eq:action}
  \Psi\colon \Sp(2d,\R) \times \Sigma_{d} \to \Sigma_{d};
  \quad
  \parentheses{
    \begin{bmatrix}
      A & B \\
      C & D
    \end{bmatrix},
    \mathcal{Z}
  }
  \mapsto
  (C + D\mathcal{Z})(A + B\mathcal{Z})^{-1}.
\end{equation}
This action is transitive: By choosing
\begin{equation*}
  X \defeq
  \begin{bmatrix}
    A & B \\
    C & D
  \end{bmatrix}
  =
  \begin{bmatrix}
    I_{d} & 0 \\
    \mathcal{A} & I_{d}
  \end{bmatrix}
  \begin{bmatrix}
    \mathcal{B}^{-1/2} & 0 \\
    0 & \mathcal{B}^{1/2}
  \end{bmatrix}
  =
  \begin{bmatrix}
    \mathcal{B}^{-1/2} & 0 \\
    \mathcal{A}\mathcal{B}^{-1/2} & \mathcal{B}^{1/2}
  \end{bmatrix},
\end{equation*}
which is easily shown to be symplectic, we have
\begin{equation*}
  \Psi_{X}({\rm i}I_d) = \mathcal{A} + {\rm i}\mathcal{B}.
\end{equation*}
The isotropy group of the element ${\rm i}I_d \in \Sigma_{d}$ is given by
\begin{align*}
  \Sp(2d,\R)_{{\rm i}I_d} &= \setdef{
    \begin{bmatrix}
      U  & V \\
      -V & U
    \end{bmatrix} \in {\sf M}_{2d}(\R)
  }{U^{T}U + V^{T}V = I_{d},\, U^{T}V = V^{T}U}
  \\
  &= \Sp(2d,\R) \cap \mathsf{O}(2d),
\end{align*}
where $\mathsf{O}(2d)$ is the orthogonal group of degree $2d$; however $\Sp(2d,\R) \cap \mathsf{O}(2d)$ is identified with $\U(d)$ as follows:
\begin{equation*}
  \Sp(2d,\R) \cap \mathsf{O}(2d) \to \U(d);
  \quad
  \begin{bmatrix}
    U  & V \\
    -V & U
  \end{bmatrix}
  \mapsto U + {\rm i}\,V.
\end{equation*}
Hence $\Sp(2d,\R)_{{\rm i}I_d} \cong \U(d)$ and thus $\Sigma_{d} \cong \Sp(2d,\R)/\U(d)$.
Indeed, we may identify $\Sp(2d,\R)/\U(d)$ with $\Sigma_{d}$ by the following map:
\begin{equation*}
  \Sp(2d,\R)/\U(d) \to \Sigma_{d};
  \quad
  [Y]_{\U(d)} \mapsto \Psi_{Y}({\rm i}I_d),
\end{equation*}
where $[\,\cdot\,]_{\U(d)}$ stands for a left coset of $\U(d)$ in $\Sp(2d,\R)$; then this gives rise to the explicit construction of the quotient map
\begin{equation}
  \label{eq:pi_Ud}
  \pi_{\U(d)}\colon \Sp(2d,\R) \to \Sp(2d,\R)/\U(d) \cong \Sigma_{d};
  \quad
  Y \mapsto \Psi_{Y}({\rm i}I_d),
\end{equation}
or more specifically,
\begin{equation*}
  \pi_{\U(d)}\parentheses{
    \begin{bmatrix}
      A & B \\
      C & D
    \end{bmatrix}
  }
  = (C + {\rm i}D)(A + {\rm i}B)^{-1}.
\end{equation*}
Therefore, we have the following diagram, which simply shows that the action $\Psi$ is indeed a left action:
Note that the map $L_{X}\colon \Sp(2d,\R) \to \Sp(2d,\R)$ is the standard matrix multiplication from the left by $X$.
\begin{equation}
  \label{cd:Sp-Sigma_d}
  \begin{tikzcd}[column sep=7ex, row sep=7ex]
    \Sp(2d,\R) \arrow{r}{} \arrow{d}[swap]{\pi_{\U(d)}} \arrow{r}{L_{X}} & \Sp(2d,\R) \arrow{d}{\pi_{\U(d)}}
    \\
    \Sigma_{d} \arrow{r}[swap]{\Psi_{X}} & \Sigma_{d}
  \end{tikzcd}
  \qquad
  \begin{tikzcd}[column sep=7ex, row sep=7ex]
    Y \arrow[mapsto]{r} \arrow[mapsto]{d} & X\,Y \arrow[mapsto]{d}
    \\
    \Psi_{Y}({\rm i}I_d) \arrow[mapsto]{r} & \Psi_{X} \circ \Psi_{Y}({\rm i}I_d)
  \end{tikzcd}
\end{equation}
As shown by \citet{Si1943}, the map $\Psi_{X}\colon \Sigma_{d} \to \Sigma_{d}$ is an isometry of the Hermitian metric~\eqref{eq:metric-Sigma_d} for any $X \in \Sp(2d,\R)$ and therefore is symplectic with respect to the symplectic form~\eqref{eq:symplectic_form-Sigma_d}.
This suggests that the $\Sigma_{d}$-component of the symplectic dynamics defined by the reduced semiclassical equations~\eqref{eq:Heller-reduced} may be lifted to the symplectic group $\Sp(2d,\R)$; see Proposition~\ref{prop:Hagedorn-Heller} below.

\subsection{Connection between Symplectic and Hagedorn Semiclassical Dynamics}
The geometry of the Siegel upper half space described above gives rise to a connection between the $\Sigma_{d}$-component of the semiclassical equations~\eqref{eq:Heller-asymptotic-reduced} and the $\Sp(2d,\R)$-component of the equations~\eqref{eq:Hagedorn-Sp} of Hagedorn:
\begin{proposition}
  \label{prop:Hagedorn-Heller}
  The $\Sigma_{d}$-component of the semiclassical equations~\eqref{eq:Heller-asymptotic-reduced}, i.e.,
  \begin{equation}
    \dot{\mathcal{C}} = -\frac{1}{m}\mathcal{C}^{2} - \nabla^{2}V(q)
    \tag{\ref{eq:Riccati-C}}
  \end{equation}
  is the projection by the quotient map $\pi_{\U(d)}\colon \Sp(2d,\R) \to \Sigma_{d}$ to the Siegel upper half space $\Sigma_{d}$ of a curve
  \begin{equation*}
    Y(t) =
    \begin{bmatrix}
      \Re Q(t) & \Im Q(t) \smallskip\\
      \Re P(t) & \Im P(t)
    \end{bmatrix}
  \end{equation*}
  in the symplectic group $\Sp(2d,\R)$ defined by
  \begin{equation}
    \dot{Y}(t) = \xi(t)\,Y(t)
    \tag{\ref{eq:Hagedorn-Sp}}
  \end{equation}
  with $\xi(t) \in \mathfrak{sp}(2d,\R)$ being
  \begin{equation}
    \xi(t) \defeq 
    \begin{bmatrix}
      0 & I_{d}/m \smallskip\\
      -\nabla^{2}V(q(t)) & 0
    \end{bmatrix},
    \tag{\ref{eq:Hagedorn-xi}}
  \end{equation}
  or equivalently,
  \begin{equation*}
    \dot{Q} = \frac{P}{m},
    \qquad
    \dot{P} =  -\nabla^{2}V(q)\,Q.
  \end{equation*}
  Furthermore, the lift \eqref{eq:Hagedorn-Sp} is unique in the following sense: For the vector field on $\Sp(2d,\R)$ defined by \eqref{eq:Hagedorn-Sp} to project via $\pi_{\U(d)}$ to \eqref{eq:Riccati-C} on $\Sigma_{d}$ for any $\mathcal{C} \in \Sigma_{d}$, $\xi(t)$ has to take the form~\eqref{eq:Hagedorn-xi}.
\end{proposition}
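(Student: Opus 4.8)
The plan is to establish both assertions by direct computation in the Hagedorn coordinates, using the explicit form $\pi_{\U(d)}(Y) = PQ^{-1}$ of the quotient map for $Y = \begin{bmatrix}\Re Q & \Im Q \\ \Re P & \Im P\end{bmatrix}$ (see \eqref{eq:pi_Ud-2}) together with the fact that $\pi_{\U(d)}$ intertwines left translation on $\Sp(2d,\R)$ with the action $\Psi$ of \eqref{eq:action} on $\Sigma_{d}$ (diagram \eqref{cd:Sp-Sigma_d}). Throughout, I would write a general element of $\mathfrak{sp}(2d,\R)$ in $d\times d$ real blocks as $\xi = \begin{bmatrix}A & B \\ C & D\end{bmatrix}$; recall that membership in $\mathfrak{sp}(2d,\R)$ means $A^{T} = -D$ and $B = B^{T}$, $C = C^{T}$.

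For the existence part, I would first observe that $\dot{Y} = \xi Y$ is equivalent to $\dot{Q} = AQ + BP$ and $\dot{P} = CQ + DP$ (split into real and imaginary parts and recombine, using that $A,B,C,D$ are real). Since $Q$ is invertible whenever $Y \in \Sp(2d,\R)$, differentiating $\mathcal{C} = \pi_{\U(d)}(Y) = PQ^{-1}$ gives
\begin{equation*}
  \dot{\mathcal{C}} = \dot{P}Q^{-1} - PQ^{-1}\dot{Q}Q^{-1} = C + D\mathcal{C} - \mathcal{C}A - \mathcal{C}B\mathcal{C},
\end{equation*}
which is precisely the infinitesimal generator $\xi_{\Sigma_{d}}$ of the action $\Psi$ evaluated at $\mathcal{C}$ — as one should expect from the equivariance of $\pi_{\U(d)}$. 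In particular this depends on $Y$ only through $\mathcal{C}$, so the push-forward vector field on $\Sigma_{d}$ is well defined. Substituting the data of \eqref{eq:Hagedorn-xi}, namely $A = D = 0$, $B = I_{d}/m$, $C = -\nabla^{2}V(q)$, yields $\dot{\mathcal{C}} = -\frac{1}{m}\mathcal{C}^{2} - \nabla^{2}V(q)$, which is \eqref{eq:Riccati-C}.

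For the uniqueness part, suppose $\xi = \begin{bmatrix}A & B \\ C & D\end{bmatrix} \in \mathfrak{sp}(2d,\R)$ is such that $C + D\mathcal{C} - \mathcal{C}A - \mathcal{C}B\mathcal{C} = -\frac{1}{m}\mathcal{C}^{2} - \nabla^{2}V(q)$ for every $\mathcal{C} \in \Sigma_{d}$. Both sides are matrix polynomials in the entries of $\mathcal{C}$, and $\Sigma_{d}$ is a nonempty open subset of the real vector space $\text{Sym}_{d}(\C)$, so the identity persists for all complex symmetric $\mathcal{C}$. Evaluating at $\mathcal{C} = 0$ gives $C = -\nabla^{2}V(q)$; replacing $\mathcal{C}$ by $t\mathcal{C}$ and comparing powers of $t\in\R$ gives $\mathcal{C}(B - \frac{1}{m}I_{d})\mathcal{C} = 0$ and $D\mathcal{C} = \mathcal{C}A$ for all symmetric $\mathcal{C}$. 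Taking $\mathcal{C} = I_{d}$ forces $B = I_{d}/m$ and $A = D$; then $A$ commutes with every real symmetric matrix, and testing against $e_{i}e_{i}^{T}$ (which makes $A$ diagonal) and against $e_{i}e_{j}^{T} + e_{j}e_{i}^{T}$ (which makes its diagonal entries equal), with $e_{i}$ the $i$-th standard basis vector of $\R^{d}$, gives $A = \lambda I_{d}$ for some $\lambda \in \R$. Finally $A^{T} = -D = -A$ forces $\lambda = 0$, so $A = D = 0$ and $\xi$ is exactly \eqref{eq:Hagedorn-xi}.

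I do not expect a substantive obstacle here: the computation reducing $\dot{Y} = \xi Y$ to the Riccati-type equation is routine, and so is the linear algebra that follows. The one step that deserves an actual (if short) argument rather than mere calculation is the passage from the pointwise identity on the open set $\Sigma_{d}$ to a polynomial identity valid on all of $\text{Sym}_{d}(\C)$ — this is what lets us evaluate at $\mathcal{C} = 0$ and $\mathcal{C} = I_{d}$, which are not themselves in $\Sigma_{d}$. One should also be slightly careful, when translating between $\Sp(2d,\R) \subset {\sf M}_{2d}(\R)$ and the complex pair $(Q,P)$, that the real $2d\times 2d$ identity $\dot{Y} = \xi Y$ is genuinely equivalent to the complex $d\times d$ equations $\dot{Q} = AQ+BP$, $\dot{P} = CQ+DP$.
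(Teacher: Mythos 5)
Your proposal is correct and follows essentially the same route as the paper: both differentiate $\mathcal{C} = PQ^{-1}$ along $\dot{Y} = \xi\,Y$ for a general $\xi \in \mathfrak{sp}(2d,\R)$, reduce the claim to the identity $C + D\mathcal{C} - \mathcal{C}A - \mathcal{C}B\mathcal{C} = -\frac{1}{m}\mathcal{C}^{2} - \nabla^{2}V(q)$ holding for all $\mathcal{C} \in \Sigma_{d}$, and then determine the blocks of $\xi$ by linear algebra. The only real difference is in execution: the paper stays inside $\Sigma_{d}$ (testing at $\mathcal{C} = {\rm i}I_{d}$ and $2{\rm i}I_{d}$, then using arbitrary symmetric real parts), whereas you extend the polynomial identity to all of $\mathrm{Sym}_{d}(\C)$ by the open-set argument and test at $\mathcal{C} = 0$, $t\mathcal{C}$, and $I_{d}$ — both are valid.
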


\begin{proof}
  Let $\mathcal{C}(t)$ be a curve in $\Sigma_{d}$ defined by \eqref{eq:Riccati-C}.
  As shown in the previous subsection, the action of the symplectic group $\Sp(2d,\R)$ on $\Sigma_{d}$ defined as
  \begin{equation}
    \label{eq:Psi}
    \Psi\colon \Sp(2d,\R) \times \Sigma_{d} \to \Sigma_{d};
    \quad
    \parentheses{
      \begin{bmatrix}
        A & B \\
        C & D
      \end{bmatrix},
      \mathcal{Z}
    }
    \mapsto
    (C + D\mathcal{Z})(A + B\mathcal{Z})^{-1}
  \end{equation}
  is transitive.
  Therefore, there exists a corresponding curve $X(t)$ in $\Sp(2d,\R)$ such that $\Psi_{X(t)}(\mathcal{C}(0)) = \mathcal{C}(t)$ and $X(0) = I_{d}$.
  Now, let $Y_{0} \in \Sp(2d,\R)$ be an element such that $\pi_{\U(d)}(Y_{0}) = \mathcal{C}(0)$, and define the curve $Y(t) \defeq X(t) Y_{0}$.
  Then clearly we have $\pi_{\U(d)} \circ Y(t) = \mathcal{C}(t)$, i.e., the following diagram commutes as in \eqref{cd:Sp-Sigma_d}.
  \begin{equation*}
    \begin{tikzcd}[column sep=7ex, row sep=7ex]
      \Sp(2d,\R) \arrow{r}{} \arrow{d}[swap]{\pi_{\U(d)}} \arrow{r}{L_{X(t)}} & \Sp(2d,\R) \arrow{d}{\pi_{\U(d)}}
      \\
      \Sigma_{d} \arrow{r}[swap]{\Psi_{X(t)}} & \Sigma_{d}
    \end{tikzcd}
    \qquad
    \begin{tikzcd}[column sep=7ex, row sep=7ex]
      Y_{0} \arrow[mapsto]{r} \arrow[mapsto]{d} & X(t) Y_{0} \arrow[mapsto]{d}
      \\
      \mathcal{C}(0) \arrow[mapsto]{r} & \mathcal{C}(t)
    \end{tikzcd}
  \end{equation*}
  Let us then write
  \begin{equation*}
    \xi(t) \defeq \dot{Y}(t)\,Y(t)^{-1} = \dot{X}(t)\,X(t)^{-1},
  \end{equation*}
  which is in the Lie algebra $\mathfrak{sp}(2d,\R)$; thus it takes the form
  \begin{equation*}
    \xi(t) = 
    \begin{bmatrix}
      \xi_{11}(t) & \xi_{12}(t) \smallskip\\
      \xi_{21}(t) & -\xi_{11}(t)^{T}
    \end{bmatrix},
  \end{equation*}
  where $\xi_{ij}$ with $i,j \in \{1, 2\}$ are all $d \times d$ real matrices and $\xi_{12}$ and $\xi_{21}$ are both symmetric; then it is easy to see that $\dot{Y} = \xi Y$ gives
  \begin{equation*}
    \dot{Q} = \xi_{11} Q + \xi_{12} P,
    \qquad
    \dot{P} = \xi_{21} Q - \xi_{11}^{T} P.
  \end{equation*}
  Therefore, from \eqref{eq:pi_Ud-2} and the above expressions, we have
  \begin{align*}
    T_{Y}\pi_{\U(d)}(\dot{Y}) &= \dot{P} Q^{-1} - P Q^{-1} \dot{Q} Q^{-1}
    \\
    &=  \xi_{21} - \xi_{11}^{T}\,\mathcal{C} - \mathcal{C}\,\xi_{11} - \mathcal{C}\,\xi_{12}\,\mathcal{C}.
  \end{align*}
  where we also used the relation $\pi_{\U(d)} \circ Y(t) = \mathcal{C}(t)$, i.e., $P Q^{-1} = \mathcal{C}$.
  However, taking the time derivative of $\pi_{\U(d)} \circ Y(t) = \mathcal{C}(t)$, we have $T\pi_{\U(d)}(\dot{Y}) = \dot{\mathcal{C}}$, which implies, using the above expression and \eqref{eq:Riccati-C},
  \begin{equation}
    \label{eq:mathcalC_equality}
    \xi_{21} - \xi_{11}^{T}\,\mathcal{C} - \mathcal{C}\,\xi_{11} - \mathcal{C}\,\xi_{12}\,\mathcal{C}
    =
    -\frac{1}{m}\mathcal{C}^{2} - \nabla^{2}V(q).
  \end{equation}
  Now, let us find the entries for $\xi(t)$ such that the above equality holds for any $\mathcal{C} \in \Sigma_{d}$.
  Setting $\mathcal{C} = {\rm i}I_{d}$ in the above equality~\eqref{eq:mathcalC_equality} gives
  \begin{equation*}
    \xi_{12} + \xi_{21} = \frac{1}{m}\,I_{d} - \nabla^{2}V(q),
    \qquad
    \xi_{11} + \xi_{11}^{T} = 0,
  \end{equation*}
  whereas setting $\mathcal{C} = 2{\rm i}I_{d}$ in \eqref{eq:mathcalC_equality} gives
  \begin{equation*}
    4\xi_{12} + \xi_{21} = \frac{4}{m}\,I_{d} - \nabla^{2}V(q),
    \qquad
    \xi_{11} + \xi_{11}^{T} = 0,
  \end{equation*}
  Therefore, we have
  \begin{equation*}
    \xi_{12} = \frac{1}{m}\,I_{d},
    \qquad
    \xi_{21} = -\nabla^{2}V,
    \qquad
    \xi_{11}^{T} = -\xi_{11}.
  \end{equation*}
  So \eqref{eq:mathcalC_equality} now reduces to 
  \begin{equation*}
    \xi_{11}\,\mathcal{C} - \mathcal{C}\,\xi_{11} = 0.
  \end{equation*}
  Writing $\mathcal{C} = \mathcal{A} + {\rm i}\mathcal{B}$ and taking the real part:
  \begin{equation*}
    \xi_{11}\,\mathcal{A} - \mathcal{A}\,\xi_{11} = 0,
  \end{equation*}
  where $\mathcal{A}$ is an arbitrary $d \times d$ symmetric matrix.
  Setting $\mathcal{A} = e_{j} e_{j}^{T}$, with $e_{j} \in \R^{d}$ being the unit vector whose $j$-th entry is 1, shows that the $j$-th column of $\xi_{11}$ is 0.
  Since $j \in \{1, \dots, n\}$ is taken arbitrary, we have $\xi_{11} = 0$.
  Hence we have \eqref{eq:Hagedorn-xi}.
  It is clear that \eqref{eq:mathcalC_equality} holds for any $\mathcal{C} \in \Sigma_{d}$ with $\xi$ taking the form \eqref{eq:Hagedorn-xi}.
\end{proof}

\begin{remark}
  One may recognize \eqref{eq:Riccati-C} as an example of the matrix Riccati equation.
  In fact, there is a similar geometric structure behind the matrix Riccati equation: As shown in \citet{HeMa1977} and \citet{DoMa1990}, one considers the action of a general linear group on a Grassmannian using the linear fractional transformation of the form \eqref{eq:Psi} (see also \eqref{eq:action}), and then performs virtually the same calculations as above to derive the matrix Riccati equation.

  The above lift may also be regarded as an example of the Hirota bilinearization of the matrix Riccati equation; see, e.g., \citet{Hi1979, Hi2000, Hi2004}.
\end{remark}

\section{Geometry of the First Variation Equation}
\label{sec:GeomOfFirstVarEq}
This section gives a brief summary of the geometry of the first variation equation.
Our main references are \citet{Tu1976a}, \citet{SnTu1972a}, \citet{AbMa1978}, and \citet{MaRaRa1991}.

\subsection{Symplectic Structure for the First Variation Equation}
Let $\mathcal{P}$ be a symplectic manifold with symplectic form $\Omega_{\mathcal{P}}$ and $H\colon \mathcal{P} \to \R$ be a Hamiltonian, and define the Hamiltonian system
\begin{equation}
  \label{eq:HamiltonianSystem_on_P}
  \ins{X_{H}}\Omega_{\mathcal{P}} = \d{H}
\end{equation}
on $\mathcal{P}$, where $X_{H}$ is the corresponding Hamiltonian vector field on $\mathcal{P}$.
Let $\phi_{t}\colon \mathcal{P} \to \mathcal{P}$ be the flow defined by the vector field $X_{H}$, i.e., for any $z \in \mathcal{P}$,
\begin{equation*}
  \left.\od{}{t}\phi_{t}(z)\right|_{t=0} = X_{H}(z).
\end{equation*}
Let $\tau_{\mathcal{P}}\colon T\mathcal{P} \to \mathcal{P}$ be the tangent bundle of $\mathcal{P}$; then $T\mathcal{P}$ is an example of a {\em special symplectic manifold} (see \cite{Tu1976a} and also \cite{SnTu1972a} and \cite[Exercise~3.3I on p.~200]{AbMa1978}) and is an exact symplectic manifold with the symplectic form $\Omega_{T\mathcal{P}} \defeq -\d\Theta_{T\mathcal{P}}$ with the canonical one-form $\Theta_{T\mathcal{P}}$ on $T\mathcal{P}$ defined as follows: For any $\delta z \in T\mathcal{P}$ and $v_{\delta z} \in T_{\delta z}(T\mathcal{P})$,
\begin{equation}
  \label{eq:Theta_TP}
  \Theta_{T\mathcal{P}}(\delta z) \cdot v_{\delta z} = \Omega_{\mathcal{P}}(T\tau_{\mathcal{P}}(v_{\delta z}), \delta z).
\end{equation}
This canonical one-form has the following nice property:

\begin{lemma}
  \label{lemma:Tf}
  If $f\colon \mathcal{P} \to \mathcal{P}$ is symplectic, then its tangent map $Tf\colon T\mathcal{P} \to T\mathcal{P}$ preserves the canonical one-form $\Theta_{T\mathcal{P}}$, i.e., $(Tf)^{*} \Theta_{T\mathcal{P}} = \Theta_{T\mathcal{P}}$, and hence is symplectic with respect to $\Omega_{T\mathcal{P}} \defeq -\d\Theta_{T\mathcal{P}}$.
\end{lemma}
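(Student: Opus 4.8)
The plan is to verify the identity $(Tf)^{*}\Theta_{T\mathcal{P}} = \Theta_{T\mathcal{P}}$ directly from the defining relation~\eqref{eq:Theta_TP}, using only that $f$ is symplectic (i.e.\ $f^{*}\Omega_{\mathcal{P}} = \Omega_{\mathcal{P}}$) together with the naturality of the tangent-bundle projection $\tau_{\mathcal{P}}$. First I would fix an arbitrary $\delta z \in T\mathcal{P}$ and an arbitrary $v_{\delta z} \in T_{\delta z}(T\mathcal{P})$, and unwind what $((Tf)^{*}\Theta_{T\mathcal{P}})(\delta z)\cdot v_{\delta z}$ means: by definition of pullback it equals $\Theta_{T\mathcal{P}}(Tf(\delta z))\cdot \big(T(Tf)(v_{\delta z})\big)$. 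Then I would apply~\eqref{eq:Theta_TP} at the point $Tf(\delta z)\in T\mathcal{P}$, which rewrites this as $\Omega_{\mathcal{P}}\big(T\tau_{\mathcal{P}}(T(Tf)(v_{\delta z})),\, Tf(\delta z)\big)$.

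The key step is the commuting-square identity $\tau_{\mathcal{P}}\circ Tf = f\circ\tau_{\mathcal{P}}$, which holds because $Tf$ is the tangent lift of $f$ and hence covers $f$ over the base. Differentiating this gives $T\tau_{\mathcal{P}}\circ T(Tf) = Tf\circ T\tau_{\mathcal{P}}$, so the first slot above becomes $Tf\big(T\tau_{\mathcal{P}}(v_{\delta z})\big)$. Also, $Tf(\delta z)$ in the second slot is literally $Tf$ applied to the vector $\delta z\in T_{\tau_{\mathcal{P}}(\delta z)}\mathcal{P}$. Thus the expression equals $\Omega_{\mathcal{P}}\big(Tf(T\tau_{\mathcal{P}}(v_{\delta z})),\, Tf(\delta z)\big)$, and now the hypothesis that $f$ is symplectic, $(f^{*}\Omega_{\mathcal{P}})(w_{1},w_{2}) = \Omega_{\mathcal{P}}(Tf(w_{1}),Tf(w_{2})) = \Omega_{\mathcal{P}}(w_{1},w_{2})$, collapses this to $\Omega_{\mathcal{P}}\big(T\tau_{\mathcal{P}}(v_{\delta z}),\,\delta z\big)$, which is exactly $\Theta_{T\mathcal{P}}(\delta z)\cdot v_{\delta z}$ by~\eqref{eq:Theta_TP}. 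Since $\delta z$ and $v_{\delta z}$ were arbitrary, $(Tf)^{*}\Theta_{T\mathcal{P}} = \Theta_{T\mathcal{P}}$.

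Finally, taking exterior derivatives and using that $\d$ commutes with pullback gives $(Tf)^{*}\Omega_{T\mathcal{P}} = (Tf)^{*}(-\d\Theta_{T\mathcal{P}}) = -\d\big((Tf)^{*}\Theta_{T\mathcal{P}}\big) = -\d\Theta_{T\mathcal{P}} = \Omega_{T\mathcal{P}}$, so $Tf$ is symplectic, completing the argument. The only point requiring care — and the main (though mild) obstacle — is keeping the base points straight: the two entries of $\Omega_{\mathcal{P}}$ in~\eqref{eq:Theta_TP} live in $T_{\tau_{\mathcal{P}}(\delta z)}\mathcal{P}$, and one must check that after applying $Tf$ both arguments have moved consistently to $T_{f(\tau_{\mathcal{P}}(\delta z))}\mathcal{P} = T_{\tau_{\mathcal{P}}(Tf(\delta z))}\mathcal{P}$, which is precisely what the commuting square $\tau_{\mathcal{P}}\circ Tf = f\circ\tau_{\mathcal{P}}$ guarantees. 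Everything else is formal.
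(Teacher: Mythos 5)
Your argument is correct and is essentially identical to the paper's own proof: both unwind the pullback via the definition~\eqref{eq:Theta_TP}, use the commuting identity $\tau_{\mathcal{P}}\circ Tf = f\circ\tau_{\mathcal{P}}$ (differentiated to $T\tau_{\mathcal{P}}\circ TTf = Tf\circ T\tau_{\mathcal{P}}$), and then invoke $f^{*}\Omega_{\mathcal{P}}=\Omega_{\mathcal{P}}$ to conclude $(Tf)^{*}\Theta_{T\mathcal{P}}=\Theta_{T\mathcal{P}}$, with symplecticity of $Tf$ following since $\d$ commutes with pullback. No gaps; your added remark about tracking base points is a fine clarification of the same computation.
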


\begin{proof}
  For any $\delta z \in T\mathcal{P}$ and $v_{\delta z} \in T_{\delta z}(T\mathcal{P})$, 
  \begin{align*}
    (Tf)^{*}\Theta_{T\mathcal{P}}(\delta z) \cdot v_{\delta z} &= \Theta_{T\mathcal{P}}(Tf(\delta z)) \cdot TTf(v_{\delta z})
    \\
    &= \Omega_{\mathcal{P}}(T\tau_{\mathcal{P}} \circ TTf(v_{\delta z}), Tf(\delta z))
    \\
    &= \Omega_{\mathcal{P}}(T(\tau_{\mathcal{P}} \circ Tf)(v_{\delta z}), Tf(\delta z))
    \\
    &= \Omega_{\mathcal{P}}(Tf \circ T\tau_{\mathcal{P}}(v_{\delta z}), Tf(\delta z))
    \\
    &= f^{*}\Omega_{\mathcal{P}}(T\tau_{\mathcal{P}}(v_{\delta z}), \delta z)
    \\
    &= \Omega_{\mathcal{P}}(T\tau_{\mathcal{P}}(v_{\delta z}), \delta z)
    \\
    &= \Theta_{T\mathcal{P}}(\delta z) \cdot v_{\delta z},
  \end{align*}
  where we used the identity $\tau_{\mathcal{P}} \circ Tf = f \circ \tau_{\mathcal{P}}$ and symplecticity of $f$ with respect to $\Omega_{\mathcal{P}}$.
\end{proof}

Now, consider the tangent map of the flow $T\phi_{t}\colon T\mathcal{P} \to T\mathcal{P}$ and let $\tilde{X}_{H}$ be the vector field on $T\mathcal{P}$ defined by the flow $T\phi_{t}$, i.e., for any $\delta z \in T\mathcal{P}$,
\begin{equation*}
  \tilde{X}_{H}(\delta z) \defeq \left.\od{}{t}T\phi_{t}(\delta z)\right|_{t=0}.
\end{equation*}

\begin{proposition}[{See, e.g., \cite[Exercise~3.8E on p.~252]{AbMa1978} and references therein}]
  The vector field $\tilde{X}_{H}$ on $T\mathcal{P}$ is the Hamiltonian vector field on $T\mathcal{P}$ with respect to the symplectic form $\Omega_{T\mathcal{P}}$ and the Hamiltonian $\tilde{H}\colon T\mathcal{P} \to \R$ defined by
  \begin{equation}
    \label{eq:hatH}
    \tilde{H}(\delta z) \defeq \d{H}(z) \cdot \delta z
  \end{equation}
  for any $\delta z \in T_{z}\mathcal{P}$, where $z = \tau_{\mathcal{P}}(\delta z)$; that is, we have $\tilde{X}_{H} = X_{\tilde{H}}$, where $X_{\tilde{H}}$ is the Hamiltonian vector field on $T\mathcal{P}$ for the above Hamiltonian $\tilde{H}$, i.e.,
  \begin{equation}
    \label{eq:HamiltonianSystem-X_hatH}
    \ins{X_{\tilde{H}}} \Omega_{T\mathcal{P}} = \d{\tilde{H}}.
  \end{equation}
\end{proposition}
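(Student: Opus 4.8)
The plan is to verify \eqref{eq:HamiltonianSystem-X_hatH} directly, exploiting that the flow of $\tilde{X}_{H}$ is, by construction, the family of tangent lifts $T\phi_{t}$ of the Hamiltonian flow $\phi_{t}$ of $X_{H}$. Since $\phi_{t}$ is the flow of a Hamiltonian vector field, each $\phi_{t}\colon\mathcal{P}\to\mathcal{P}$ is a symplectomorphism, so Lemma~\ref{lemma:Tf} applies and gives $(T\phi_{t})^{*}\Theta_{T\mathcal{P}}=\Theta_{T\mathcal{P}}$ for all $t$. Differentiating this identity at $t=0$ yields $\mathcal{L}_{\tilde{X}_{H}}\Theta_{T\mathcal{P}}=0$, i.e.\ the tangent-lifted field is a symmetry of the canonical one-form on $T\mathcal{P}$. (One argues locally in $t$, so completeness of $\phi_{t}$ is not needed.)

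With this in hand, the Hamiltonian property follows from Cartan's magic formula. Since $\Omega_{T\mathcal{P}}=-\d\Theta_{T\mathcal{P}}$,
\begin{equation*}
  0=\mathcal{L}_{\tilde{X}_{H}}\Theta_{T\mathcal{P}}=\ins{\tilde{X}_{H}}\d\Theta_{T\mathcal{P}}+\d\bigl(\ins{\tilde{X}_{H}}\Theta_{T\mathcal{P}}\bigr)=-\ins{\tilde{X}_{H}}\Omega_{T\mathcal{P}}+\d\bigl(\ins{\tilde{X}_{H}}\Theta_{T\mathcal{P}}\bigr),
\end{equation*}
so $\ins{\tilde{X}_{H}}\Omega_{T\mathcal{P}}=\d\bigl(\ins{\tilde{X}_{H}}\Theta_{T\mathcal{P}}\bigr)$; that is, $\tilde{X}_{H}$ is (globally) Hamiltonian on $(T\mathcal{P},\Omega_{T\mathcal{P}})$ with Hamiltonian function $\ins{\tilde{X}_{H}}\Theta_{T\mathcal{P}}$. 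It then remains only to identify this function with $\tilde{H}$ from \eqref{eq:hatH}.

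For that identification I would use the defining formula \eqref{eq:Theta_TP} for $\Theta_{T\mathcal{P}}$. Fix $\delta z\in T_{z}\mathcal{P}$ with $z=\tau_{\mathcal{P}}(\delta z)$. By the definition of $\tilde{X}_{H}$ as the generator of $T\phi_{t}$ together with the naturality of the bundle projection, $\tau_{\mathcal{P}}\circ T\phi_{t}=\phi_{t}\circ\tau_{\mathcal{P}}$, one gets $T\tau_{\mathcal{P}}\bigl(\tilde{X}_{H}(\delta z)\bigr)=X_{H}(z)$, since the left-hand side is the velocity at $t=0$ of $t\mapsto\tau_{\mathcal{P}}(T\phi_{t}(\delta z))=\phi_{t}(z)$. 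Substituting $v_{\delta z}=\tilde{X}_{H}(\delta z)$ into \eqref{eq:Theta_TP},
\begin{equation*}
  \bigl(\ins{\tilde{X}_{H}}\Theta_{T\mathcal{P}}\bigr)(\delta z)=\Omega_{\mathcal{P}}\bigl(X_{H}(z),\delta z\bigr)=\bigl(\ins{X_{H}}\Omega_{\mathcal{P}}\bigr)(z)\cdot\delta z=\d{H}(z)\cdot\delta z=\tilde{H}(\delta z),
\end{equation*}
the third equality being \eqref{eq:HamiltonianSystem_on_P} and the last \eqref{eq:hatH}. Combined with the previous display this is exactly \eqref{eq:HamiltonianSystem-X_hatH}, hence $\tilde{X}_{H}=X_{\tilde{H}}$.

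Every step is routine once Lemma~\ref{lemma:Tf} is available, so I do not anticipate a genuine obstacle; the only point requiring care is the bookkeeping of bundle maps in the identification step --- in particular using the naturality $\tau_{\mathcal{P}}\circ T\phi_{t}=\phi_{t}\circ\tau_{\mathcal{P}}$ to compute $T\tau_{\mathcal{P}}\bigl(\tilde{X}_{H}(\delta z)\bigr)$ --- together with keeping consistent the sign conventions in $\Omega_{T\mathcal{P}}=-\d\Theta_{T\mathcal{P}}$ and $\ins{X_{H}}\Omega_{\mathcal{P}}=\d H$. A purely coordinate-based alternative would be to choose Darboux coordinates $(q^{i},p_{i})$ on $\mathcal{P}$ and induced coordinates $(q^{i},p_{i},\delta q^{i},\delta p_{i})$ on $T\mathcal{P}$, in which $\Omega_{T\mathcal{P}}$ and $\tilde{H}$ take forms analogous to \eqref{eq:Omega_TTstarR^d}--\eqref{eq:tildeH}, and then verify \eqref{eq:HamiltonianSystem-X_hatH} componentwise against the linearized equations of motion; this is more computational but entirely elementary.
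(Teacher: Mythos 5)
Your argument is correct and essentially identical to the paper's proof: both use Lemma~\ref{lemma:Tf} to get $(T\phi_{t})^{*}\Theta_{T\mathcal{P}}=\Theta_{T\mathcal{P}}$, hence $\pounds_{\tilde{X}_{H}}\Theta_{T\mathcal{P}}=0$, combine this with Cartan's formula, and identify $\ins{\tilde{X}_{H}}\Theta_{T\mathcal{P}}=\tilde{H}$ via the relation $T\tau_{\mathcal{P}}\circ\tilde{X}_{H}=X_{H}\circ\tau_{\mathcal{P}}$ and the defining formula \eqref{eq:Theta_TP}. The only difference is the order in which the two ingredients are assembled, which is immaterial.
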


\begin{proof}
  Taking the time derivative at $t = 0$ of the identity
  \begin{equation*}
    \tau_{\mathcal{P}} \circ T\phi_{t}(\delta z) = \phi_{t}(z),
  \end{equation*}
  we obtain
  \begin{equation*}
    T\tau_{\mathcal{P}} \circ \tilde{X}_{H}(\delta z) = X_{H}(z).
  \end{equation*}
  Therefore, from the definition~\eqref{eq:Theta_TP} of the canonical one-form $\Theta_{T\mathcal{P}}$
  \begin{align*}
    \ins{\tilde{X}_{H}} \Theta_{T\mathcal{P}}(\delta z)
    &= \Omega_{\mathcal{P}}\parentheses{ T\tau_{\mathcal{P}} \circ \tilde{X}_{H}(\delta z), \delta z }
    \\
    &= \Omega_{\mathcal{P}}\parentheses{ X_{H}(\delta z), \delta z }
    \\
    &= \d{H}(z) \cdot \delta z.
    \\
    &= \tilde{H}(\delta z),
  \end{align*}
  where we used the definition~\eqref{eq:hatH} of $\tilde{H}$.
  Taking the exterior differential of the above, we have $\d\,\ins{\tilde{X}_{H}} \Theta_{T\mathcal{P}} = \d\tilde{H}$.
  However, since $\phi_{t}$ is symplectic, we have $(T\phi_{t})^{*} \Theta_{T\mathcal{P}} = \Theta_{T\mathcal{P}}$ by Lemma~\ref{lemma:Tf} and so $\pounds_{\!\tilde{X}_{H}} \Theta_{T\mathcal{P}} = 0$; then Cartan's formula gives $\d\,\ins{\!\tilde{X}_{H}}\Theta_{T\mathcal{P}} = -\ins{\!\tilde{X}_{H}}\d\Theta_{T\mathcal{P}} = \ins{\!\tilde{X}_{H}}\Omega_{T\mathcal{P}}$.
\end{proof}

\subsection{Local Expressions and the First Variation Equation}
Canonical coordinates $z = (q,p)$ for $\mathcal{P}$ induces the coordinates $\delta z = (q, p, \delta q, \delta p)$ on $T\mathcal{P}$ and then we have
\begin{equation*}
  \Theta_{T\mathcal{P}}(\delta z) = \delta p \cdot \d{q} - \delta q \cdot \d{p}
\end{equation*}
and
\begin{equation*}
  \Omega_{T\mathcal{P}}(\delta z) =  \d{\delta q} \wedge \d{p} + \d{q} \wedge \d{\delta p}.
\end{equation*}
Also, the Hamiltonian $\tilde{H}$ can be written as follows:
\begin{equation*}
  \tilde{H}(\delta z) = \pd{H}{q} \cdot \delta q + \pd{H}{p} \cdot \delta p.
\end{equation*}
Then the Hamiltonian system $\ins{\tilde{X}_{H}} \Omega_{T\mathcal{P}} = \d{\tilde{H}}$ gives the first variation equation, i.e., the classical Hamiltonian system~\eqref{eq:classicalHamSys} with the linearized system~\eqref{eq:LinearizedSystem} along its solution.

\subsection{Symmetry and Conservation Laws in the First Variation Equation}
\label{ssec:Symmetry_in_First_Variation_Eq}
Let $G$ be a Lie group and consider its symplectic action $\Phi\colon G \times \mathcal{P} \to \mathcal{P}$ on the symplectic manifold $\mathcal{P}$, and suppose that the Hamiltonian system~\eqref{eq:HamiltonianSystem_on_P} (or locally \eqref{eq:classicalHamSys}) on $\mathcal{P}$ has a $G$-symmetry.
Then one can easily show that the first variation equation~\eqref{eq:HamiltonianSystem-X_hatH} also has a symmetry under the action $T\Phi\colon G \times T\mathcal{P} \to T\mathcal{P}$ induced by the tangent map $T\Phi_{g}$ of $\Phi_{g}$:

\begin{lemma}
  \label{lemma:H-hatH_symmetry}
  If the Hamiltonian $H\colon \mathcal{P} \to \R$ is invariant under the $G$-action, i.e., $H \circ \Phi_{g} = H$ for any $g \in G$, then so is the Hamiltonian $\tilde{H}\colon T\mathcal{P} \to \R$, i.e., $\tilde{H} \circ T\Phi_{g} = \tilde{H}$ for any $g \in G$.
\end{lemma}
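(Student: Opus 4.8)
The plan is to unwind the definition of $\tilde{H}$ in \eqref{eq:hatH} and use the chain rule together with the $G$-invariance of $H$. Concretely, fix $g \in G$ and an arbitrary $\delta z \in T_{z}\mathcal{P}$, where $z = \tau_{\mathcal{P}}(\delta z)$. First I would record that $T\Phi_{g}(\delta z) \in T_{\Phi_{g}(z)}\mathcal{P}$, so that by \eqref{eq:hatH},
\begin{equation*}
  (\tilde{H} \circ T\Phi_{g})(\delta z) = \tilde{H}(T\Phi_{g}(\delta z)) = \d{H}(\Phi_{g}(z)) \cdot T_{z}\Phi_{g}(\delta z).
\end{equation*}

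Next I would recognize the right-hand side as $\d(H \circ \Phi_{g})(z) \cdot \delta z$ by the chain rule for the exterior derivative of a composition, i.e. $\d(H \circ \Phi_{g}) = (\Phi_{g})^{*}\d{H}$. Then the hypothesis $H \circ \Phi_{g} = H$ collapses this to $\d{H}(z) \cdot \delta z = \tilde{H}(\delta z)$, which is exactly the desired identity $\tilde{H} \circ T\Phi_{g} = \tilde{H}$. Since $g$ and $\delta z$ were arbitrary, this finishes the proof.

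There is essentially no obstacle here: the only thing to be careful about is keeping the base points straight (the covector $\d{H}$ is evaluated at $\Phi_{g}(z)$, not at $z$, before invariance is applied) and invoking the functorial identity $T(\Phi_{g}) $ on the tangent bundle matches the tangent-lifted action $T\Phi_{g}$ from the statement. One could alternatively phrase the whole computation in the canonical coordinates $\delta z = (q,p,\delta q,\delta p)$ of the preceding subsection, writing $\tilde{H}(\delta z) = \pd{H}{q}\cdot\delta q + \pd{H}{p}\cdot\delta p$ and differentiating $H \circ \Phi_{g} = H$, but the coordinate-free one-line argument above is cleaner and makes the role of Lemma~\ref{lemma:Tf}-style functoriality transparent.
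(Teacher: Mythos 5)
Your proposal is correct and matches the paper's own argument essentially line for line: unwind \eqref{eq:hatH}, use $\Phi_{g}^{*}\d{H} = \d(\Phi_{g}^{*}H)$ (the chain rule/naturality of $\d$), and then invoke $H \circ \Phi_{g} = H$ to conclude $\tilde{H} \circ T\Phi_{g} = \tilde{H}$. Your remark about keeping the base point $\Phi_{g}(z)$ straight before applying invariance is exactly the only delicate point, and it is handled correctly.
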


\begin{proof}
  Follows easily from the following simple calculations: For any $\delta z \in T\mathcal{P}$ with $z = \tau_{\mathcal{P}}(\delta z)$ and any $g \in G$,
  \begin{align*}
    \tilde{H} \circ T\Phi_{g}(\delta z) &= \d{H}(\Phi_{g}(z)) \cdot T\Phi_{g}(\delta z)
    \\
    &= (\Phi_{g}^{*}\d{H})(z) \cdot \delta z
    \\
    &= \d(\Phi_{g}^{*}{H})(z) \cdot \delta z
    \\
    &= \d{H}(z) \cdot \delta z
    \\
    &= \tilde{H}(\delta z). \qedhere
  \end{align*}
\end{proof}

Now we are ready to state Noether's theorem for the first variation equation:
\begin{proposition}
  \label{prop:hatJ}
  Suppose that the Hamiltonian $H\colon \mathcal{P} \to \R$ is invariant under the $G$-action, and let ${\bf J}\colon \mathcal{P} \to \mathfrak{g}^{*}$ be the corresponding momentum map; then the momentum map $\tilde{\bf J}\colon T\mathcal{P} \to \mathfrak{g}^{*}$ corresponding to the induced action $T\Phi$ on $T\mathcal{P}$ is given by
  \begin{equation*}
    \tilde{{\bf J}}(\delta z) \defeq \d{\bf J}(z) \cdot \delta z,
  \end{equation*}
  and $\tilde{{\bf J}}$ is conserved the along the flow of the first variation equation~\eqref{eq:HamiltonianSystem-X_hatH}.
\end{proposition}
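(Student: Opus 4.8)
The plan is to realize $\tilde{\bf J}$ as a momentum map by applying the Proposition above (the one asserting $\tilde{X}_{H} = X_{\tilde{H}}$) to each Hamiltonian generating a one-parameter subgroup of the $G$-action, and then to deduce conservation from Lemma~\ref{lemma:H-hatH_symmetry} via Noether's theorem on $T\mathcal{P}$. Throughout one uses that $\Phi$ is symplectic and that the momentum map ${\bf J}$ exists, so that for each $\xi \in \mathfrak{g}$ the infinitesimal generator $\xi_{\mathcal{P}}$ of $\Phi$ on $\mathcal{P}$ is the (globally Hamiltonian) vector field $X_{{\bf J}(\xi)}$, where ${\bf J}(\xi) \defeq \ip{{\bf J}}{\xi} \in C^{\infty}(\mathcal{P})$, and its flow is the globally defined map $t \mapsto \Phi_{\exp(t\xi)}$.

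First I would fix $\xi \in \mathfrak{g}$ and apply the preceding Proposition with $H$ replaced by ${\bf J}(\xi)$ and $\phi_{t}$ replaced by $\Phi_{\exp(t\xi)}$ --- legitimate since each $\Phi_{\exp(t\xi)}$ is symplectic. It yields that the tangent-lifted flow $T\Phi_{\exp(t\xi)}$ on $T\mathcal{P}$ is the Hamiltonian flow, with respect to $\Omega_{T\mathcal{P}}$, of the function $\delta z \mapsto \d({\bf J}(\xi))(z)\cdot\delta z$ on $T\mathcal{P}$. But the generator of $T\Phi_{\exp(t\xi)}$ is, by definition, the infinitesimal generator $\xi_{T\mathcal{P}}$ of the induced action $T\Phi$, so $\ins{\xi_{T\mathcal{P}}}\Omega_{T\mathcal{P}} = \d\bigl(\d({\bf J}(\xi))(z)\cdot\delta z\bigr)$. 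Since
\[
  \d({\bf J}(\xi))(z)\cdot\delta z = \d\ip{{\bf J}}{\xi}(z)\cdot\delta z = \ip{\d{\bf J}(z)\cdot\delta z}{\xi} = \ip{\tilde{\bf J}(\delta z)}{\xi}
\]
for every $\xi \in \mathfrak{g}$, the $\mathfrak{g}^{*}$-valued map $\tilde{\bf J}(\delta z) \defeq \d{\bf J}(z) \cdot \delta z$ is precisely a momentum map for the action $T\Phi$ on $(T\mathcal{P}, \Omega_{T\mathcal{P}})$.

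For the conservation statement, Lemma~\ref{lemma:H-hatH_symmetry} gives that $\tilde{H}$ is invariant under $T\Phi$; hence $\pounds_{\xi_{T\mathcal{P}}}\tilde{H} = 0$ for every $\xi$, i.e., since $\xi_{T\mathcal{P}}$ is the Hamiltonian vector field of $\ip{\tilde{\bf J}}{\xi}$, the Poisson bracket $\{\ip{\tilde{\bf J}}{\xi}, \tilde{H}\}_{T\mathcal{P}}$ vanishes for all $\xi$. This is exactly the assertion that $\tilde{\bf J}$ is constant along the flow of $X_{\tilde{H}}$, i.e., along the first variation equation~\eqref{eq:HamiltonianSystem-X_hatH}; equivalently, it is Noether's theorem (\citet[Theorem~11.4.1]{MaRa1999}) applied on the symplectic manifold $T\mathcal{P}$ to the $T\Phi$-invariant Hamiltonian $\tilde{H}$ with momentum map $\tilde{\bf J}$.

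The one step that is not purely formal is the bookkeeping in the second paragraph: identifying the tangent-lifted flow $T\Phi_{\exp(t\xi)}$ with the object the preceding Proposition calls $T\phi_{t}$ for the choice $X_{H} = \xi_{\mathcal{P}}$, and its generator with $\xi_{T\mathcal{P}}$. This rests on the standard fact that $t \mapsto \Phi_{\exp(t\xi)}$ is the flow of the infinitesimal generator $\xi_{\mathcal{P}}$ together with functoriality of the tangent construction (which also underlies Lemma~\ref{lemma:Tf}); once this is in place, everything else follows by quoting the preceding Proposition, Lemma~\ref{lemma:H-hatH_symmetry}, and Noether's theorem. As a sanity check one can instead verify the formula for $\tilde{\bf J}$ directly in canonical coordinates, where $\xi_{\mathcal{P}} = X_{{\bf J}(\xi)}$ has components $(\partial{\bf J}(\xi)/\partial p, -\partial{\bf J}(\xi)/\partial q)$ and its linearization along solutions is the Hamiltonian vector field on $(T\mathcal{P},\Omega_{T\mathcal{P}})$ generated by $\partial{\bf J}(\xi)/\partial q \cdot \delta q + \partial{\bf J}(\xi)/\partial p \cdot \delta p = \d({\bf J}(\xi))(z)\cdot\delta z = \ip{\tilde{\bf J}(\delta z)}{\xi}$.
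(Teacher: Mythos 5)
Your proof is correct, but it reaches the momentum-map formula by a genuinely different route than the paper. The paper invokes Lemma~\ref{lemma:Tf} to conclude that each $T\Phi_{g}$ preserves the canonical one-form $\Theta_{T\mathcal{P}}$, then uses the standard formula for momentum maps of actions preserving an exact symplectic potential, $\ip{\tilde{\bf J}(\delta z)}{\xi} = \ins{\xi_{T\mathcal{P}}}\Theta_{T\mathcal{P}}(\delta z)$, and evaluates this intrinsically via the definition~\eqref{eq:Theta_TP} of $\Theta_{T\mathcal{P}}$ together with the relation $T\tau_{\mathcal{P}}\circ\xi_{T\mathcal{P}} = \xi_{\mathcal{P}}\circ\tau_{\mathcal{P}}$ and the defining property~\eqref{eq:J-Hamiltonian} of ${\bf J}$. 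You instead apply the preceding proposition ($\tilde{X}_{H} = X_{\tilde{H}}$) componentwise, with $H$ replaced by ${\bf J}(\xi) \defeq \ip{{\bf J}}{\xi}$ and $\phi_{t}$ by $\Phi_{\exp(t\xi)}$; since $\xi_{\mathcal{P}} = X_{{\bf J}(\xi)}$ by~\eqref{eq:J-Hamiltonian} and each $\Phi_{\exp(t\xi)}$ is symplectic, this yields $\ins{\xi_{T\mathcal{P}}}\Omega_{T\mathcal{P}} = \d\ip{\tilde{\bf J}}{\xi}$ directly, i.e., the defining momentum-map identity, without handling $\Theta_{T\mathcal{P}}$ explicitly (though the proposition you quote was of course itself proved with Lemma~\ref{lemma:Tf}, so the underlying machinery is shared). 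Your version buys a cleaner logical reduction --- the lifted momentum map is literally the lift of each generating function ${\bf J}(\xi)$ --- and it makes the conservation step explicit via Lemma~\ref{lemma:H-hatH_symmetry} and Noether's theorem on $(T\mathcal{P},\Omega_{T\mathcal{P}})$, a step the paper leaves implicit; the paper's computation, on the other hand, exhibits $\tilde{\bf J}$ in the form $\ins{\xi_{T\mathcal{P}}}\Theta_{T\mathcal{P}}$, which parallels the way the momentum map $J_{\hbar}(\xi)$ is computed from $\Theta_{\mathcal{M}_{\hbar}}$ in the proof of Theorem~\ref{thm:semiclassical_angular_momentum}. The bookkeeping point you flag --- identifying $T\Phi_{\exp(t\xi)}$ with the tangent-lifted flow of $\xi_{\mathcal{P}}$ and its generator with $\xi_{T\mathcal{P}}$ --- is indeed the only nontrivial identification, and it is sound.
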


\begin{proof}
  Let $\xi$ be an arbitrary element in the Lie algebra $\mathfrak{g}$ of $G$, and $\xi_{\mathcal{P}}$ and $\xi_{T\mathcal{P}}$ be its infinitesimal generators on $\mathcal{P}$ and $T\mathcal{P}$, respectively, i.e.,
  \begin{equation*}
    \xi_{\mathcal{P}}(z) \defeq \left. \od{}{t}\Phi_{\exp(t\xi)}(z) \right|_{t=0},
    \qquad
    \xi_{T\mathcal{P}}(\delta z) \defeq \left. \od{}{t}T\Phi_{\exp(t\xi)}(\delta z) \right|_{t=0}.
  \end{equation*}
  By taking the time derivative at $t = 0$ of the identity
  \begin{equation*}
    \tau_{\mathcal{P}} \circ T\Phi_{\exp(t\xi)}(\delta z) = \Phi_{\exp(t\xi)} \circ \tau_{\mathcal{P}}(\delta z),
  \end{equation*}
  we have
  \begin{equation}
    \label{eq:xi_TP-xi_P}
    T\tau_{\mathcal{P}} \circ \xi_{T\mathcal{P}} = \xi_{\mathcal{P}} \circ \tau_{\mathcal{P}}.
  \end{equation}
  Also recall that the momentum map ${\bf J}\colon \mathcal{P} \to \mathfrak{g}^{*}$ satisfies
  \begin{equation}
    \label{eq:J-Hamiltonian}
    \ins{\xi_{\mathcal{P}}} \Omega_{\mathcal{P}}(z) = \d\ip{ {\bf J}(z) }{ \xi }.
  \end{equation}
  Now, since $\Phi$ is a symplectic action, $T\Phi$ preserves the canonical one-form $\Theta_{T\mathcal{P}}$ by Lemma~\ref{lemma:Tf}; hence we can calculate the momentum map $\tilde{\bf J}\colon T\mathcal{P} \to \mathfrak{g}^{*}$ as follows~(see, e.g., \cite[Theorem~4.2.10 on p.~282]{AbMa1978}):
  \begin{align*}
    \ip{ \tilde{\bf J}(\delta z) }{ \xi } &= \ins{\xi_{T\mathcal{P}}}\Theta_{T\mathcal{P}}(\delta z).
    \\
    &= \Omega_{\mathcal{P}}(T\tau_{\mathcal{P}} \circ \xi_{T\mathcal{P}}(\delta z), \delta z)
    \\
    &= \Omega_{\mathcal{P}}(\xi_{\mathcal{P}}(z), \delta z)
    \\
    &= \d\ip{ {\bf J}(z) }{ \xi } \cdot \delta z
    \\
    &= \ip{ \d{\bf J}(z) \cdot \delta z }{ \xi },
  \end{align*}
  where we also used \eqref{eq:xi_TP-xi_P} and \eqref{eq:J-Hamiltonian}.
\end{proof}

\bibliography{SymInSemiClDyn}
\bibliographystyle{plainnat}

\end{document}